\newfont{\sffl}{msbm10 at 16pt} 
\newfont{\sff}{msbm10 at 10pt}
\theoremstyle{plain}
\newtheorem{theorem}{Theorem} 
\newtheorem{lemma}{Lemma}     
\newcommand{\R}[1]{${\mbox{\sff R}}^{#1}$} 
\newcommand{\mR}[1]{{\mbox{\sff R}}^{#1}}
\newcommand{\mRR}{{\mbox{\sff R}}}
\newcommand{\C}{${\mbox{\sff C}}$}
\newcommand{\mC}{{\mbox{\sff C}}}
\newcommand{\eq}{:=}
\begin{document}           
\title{Uniqueness Theorems for Point Source Expansions:\\
 DIDACKS V\thanks{\small{Approved for public release; distribution is unlimited.}}}

\author{Alan Rufty}         
\date{November 28, 2007}

\maketitle                 

\newcommand{\KD}{K_{\text{D}}}

\begin{abstract}
  In the \emph{Principia Mathematica} Sir Isaac Newton proved that concentric mass shells with  equivalent mass distributions produce the same external gravitational field and thus that the problem of estimating a continuous interior mass distribution from external field information alone is ill-posed.   What is generally less well known is that finite collections of point masses contained in some bounded domain produce a unique field in the exterior domain, which means that the associated basis functions (often called ``fundamental solutions'') are independent.  A new proof of this result is given in this paper that can be generalized to other finite combinations of point source distributions.  For example, one result this paper shows in $\mathbb{R}^3$ is that a finite combination of (gravitational or electrostatic) point dipole sources contained in some interior region produces a unique field in the corresponding exterior region of interest. 

  Since no direct proofs of uniqueness results of this type are known for the \R{3} setting, which is the setting of primary practical interest, an indirect strategy is necessary.  The strategy employed in the paper is to develop results for analytic functions in the complex plane, \C, where logarithmic source basis functions correspond to point mass basis functions, and then carry them over to harmonic functions in the real plane, \R{2}, and from there to harmonic functions in \R{3}.  Although some of the results obtained can be generalized from \R{3} to \R{n}, for $n > 3$, far more results are shown for \R{2} and for \C\ than are shown for these more general settings.  For example, in the complex plane, the paper shows that a finite combination of higher order poles of any order in the interior of a unit disk always corresponds to a unique analytic function in the exterior of a unit disk. 
\end{abstract}

\vskip .05in
\noindent
\begin{itemize}
\item[\ \ ] \small{\textbf{Key words:} {Laplace's equation, inverse problem, potential theory, point sources, Vandermode\\ \phantom{Key words. L} matrix,} 
fundamental solutions, multipole}
\item[\ \ ] \small{\textbf{AMS subject classification (2000):} {Primary 86A20. Secondary 35J05, 30E05, 86A22}}
\end{itemize}

\newcommand{\SubSec}[1]{

\vskip .18in
\noindent
\underline{{#1}}
\vskip .08in

}
\newcommand{\eiii}{$\mbox{\sff E}^3$} 
\newcommand{\ls}{\vphantom{\big)}} 
\newcommand{\lsm}{\!\vphantom{\big)}} 

\newcommand{\smallindex}[1]{\text{\raisebox {1.5pt} {${}_{#1}$}}}

\newcommand{\mLarge}[1]{\text{\begin{Large} $#1$ \end{Large}}}
\newcommand{\mSmall}[1]{\text{\begin{footnotesize} $#1$ \end{footnotesize}}}

\newcommand{\Blbrac}{\rlap{\bigg{\lceil}}\bigg{\lfloor}} 
\newcommand{\Brbrac}{\bigg{\rbrack}} 


\newcommand{\Mop}{\mathbb{M}}

\renewcommand {\baselinestretch}{1.35} 
 
\section{Applied Backdrop}\label{S:Backdrop}

Inverse source problems associated with harmonic functions (i.e., ones that satisfy Laplace's equation) are a  small area of modern inverse source theory; however, this area encompasses problems in geophysics, geoexploration and electrostatics, as well as many other applied sciences.  Moreover this area undoubtedly contains the oldest substantial mathematical result of any real significance associated with inverse source theory: Sir Isaac Newton showed in his \emph{Principia Mathematica} that a continuous spherically symmetric distribution produces an external field equivalent to that of a point mass located at the sphere's center, provided that the total masses of both are the same.  From this result, it follows immediately that the problem of determining continuous mass distributions inside some interior region from external gravity field information alone is an ill-posed problem, because many solutions are possible.  What is generally less well known is a proof exists for \R{n} ($n \geq 2$) that shows that a finite set of point masses located inside some bounded region produces a unique field in the exterior region \cite{geoPMuniq}.  This paper presents an alternative proof of this \R{n} result, as well as various generalizations of it. One of these generalizations is that a finite set of point dipoles is shown to be independent in \R{3}.  Other generalizations of this result are shown in the complex plane, $\mathbb{C}$, and the real plane, \R{2}.  Finally, one open question is analyzed extensively here; namely, are linear combinations of \R{3} combined point mass/point dipole basis functions linearly independent?  In the end, the independence of these basis functions is shown to hinge on the invertibility of a matrix with complex entries that is related to the Vandermode matrix [and which is specified by (\ref{E:Cmatrix2})].  Clearly, in general, this generalized Vandermode matrix is invertible, however, currently no proof of this invertibility for all $n$ is known.

   In what follows, the term point source will be  used to denote a general source term [i.e., a point mass or (point charge), a point mass dipole (or an electrostatic dipole), a point quadrapole or a higher order (electrostatic) multipole in \R{n} ($n \geq 2$)---or a logarithmic pole (with a specific form), a simple pole or a higher order pole in $\mathbb{C}$].     

 Because uniqueness results may seem to be somewhat removed from the applications arena, before considering the paper's basic approach and content in more detail, it is appropriate to briefly consider the motivations for studying point source uniqueness results.  The primary motivations can be framed as follows:
\begin{enumerate}
\item
By better understanding and quantifying various point source uniqueness results, the hope is that geophysical inverse source theory can be better understood and placed on a firmer physical and mathematical foundation.  (Point source uniqueness results should be viewed as merely a first tentative step in this goal.)
\item
 Point source uniqueness results show that, in theory, when well formulated algorithms are properly implemented, point source determination software always produces reliable results.  In particular, by showing that combinations of point sources and point dipoles are linearly independent, one can immediately infer that the DIDACKS implementation that interpolate for the  scalar potential of gravity and the vector components of gravity are mathematically well framed, as discussed in \cite{DIDACKS,DIDACKSI,DIDACKSII} and \cite{DIDACKSIII}.
\item
 As discussed in Appendix A of \cite{DIDACKSII}, this uniqueness result for point sources and point dipoles also shows that geophysical collocation procedures for similar data sets (such as geoid height and the vector components of gravity disturbance) are mathematically consistent since the associated error-free covariance matrix can always be inverted.  
\item
  Finally, as always, when a different topic is addressed, different mathematical techniques may be called for, and thus there is the possibility of uncovering new mathematical theorems, results and techniques that may be of interest in other domains.  Here, not only are the results dealing with logs and poles in \C\ suggestive, but so are several of the (geometric) theorems that allow for results to be carried over from \R{2} to \R{n}.  Also, since Vandermode's matrix in an extended form is used here, it is not unreasonable to think that there are mathematical links to general interpolation theory, where the Vandermode matrix is often employed, and that this is a two-way street (which means that there may be general theorems in interpolation theory that allow for more general point source uniqueness theorems than those proven here).
\end{enumerate}
Here item 1. is worth delving into a little deeper, since placing inverse source gravity theory (i.e., $\mathbb{R}^3$ Laplacian inverse source theory) on a firmer mathematical and physical foundation is one goal of the DIDACKS sequence of papers initiated by \cite{DIDACKSI}.   In this regard, there are a number of interconnected points that are relevant, and it is appropriate to point one of them out here.  First, consider the point made in \cite{Featherstone} that since each point source can be replaced by an equivalent sphere of uniform density, one can replace a distribution of point sources with an equivalent set of spheres that produces a more uniform distribution of mass.  Thus, consider a collection of point masses that have different strengths from point to point and are located on a regular cubic grid.  Since the grid spacing is uniform each point mass can be replaced by a sphere of like size that has a corresponding density.  From the uniqueness theorems given here one can infer that the estimation problem for the associated densities of this spherical arrangement is well posed.  Moreover, since it is well known that as the grid spacing for a set of point masses increases the underlying estimation problem becomes better conditioned, one can immediately infer that the same holds for this spherical approximation.  This result has other generalizations and interpretations as well.  For example, in geoexploration it is common to use a set of right parallelepipeds of uniform density to approximate continuous mass densities since they produce potentials that can be calculated in closed form \cite[p. 398]{Pick} and they yield inverse source density estimation equations that are (relatively) stable.  As a next step, one might consider extending the uniqueness results here to uniform cubic latices (or one might consider applying DIDACKS theory to the estimation of such cubic distributions).  This and other  points raised above clearly dovetail with similar discussion in \cite{DIDACKSIV}.

\section{Mathematical Preliminaries}\label{S:intro}

 The general method of proof employed is to start with the easiest to handle uniqueness case---which turns out to be a distribution of simple poles of the form $1/(z-z_k)$ in the complex plane--and then generalize this result in various ways:  First to uniqueness of logarithmic basis functions in \C, then to higher order poles of the form $1/(z-z_k)^n$.  Next a (well-known) one-to-one correspondence between various types of expansions in \C\ and \R{2} is noted: logarithmic potentials $\iff$ point masses;  simple poles $\iff$ dipoles; poles of order $n$ $\iff$  multipoles of higher order for $n = 1,\,2,\,3$.  It is then shown that uniqueness of an expansion in \C\ implies uniqueness of a like expansion in \R{2}---thus proving the \R{2} cases.  Finally it is shown explicitly that point mass uniqueness results in \R{2} imply point mass uniqueness results in \R{3}.  These correspondences and the symbols used to represent the associated expansions in the sequel are shown in Table~\ref{Ta:Sources}.  The arguments employed can clearly be generalized in two ways:  (1) To show that similar results hold for multipoles of order three or less in \R{3}.  (2) To show that results obtained in \R{3} imply similar results in \R{N} for $N \geq 2$.  For concreteness and for ease of exposition these two generalizations will not be explicitly dealt with here since they do not correspond to commonly occuring physically relevant cases and, when handled explicitly, entail some technical difficulties.

  In \R{2} and \R{3} the conservative vector fields of interest can be obtained by taking the gradient of a scalar field:   
\begin{equation}\label{E:grad}
 \vec{F}(\vec{X}) = - \mathbf{\nabla} U(\vec{X})\,,
\end{equation}
where, of course, $\mathbf{\nabla}$ is the $N-$dimensional gradient and where $\vec{X} \in \mR{N}$.  Potentials specified by (\ref{E:grad}) satisfy Laplace's equation in $N-$dimensions: ${\nabla}^2 U = 0$.  Arguably the most significant cases, and the ones that will be focused on here, are the electrostatic and gravitational fields in \R{2} and \R{3}.  Because of the issue of the sign associated with the mutual attraction of point sources and the issue of the sign of the energy density play no role in what follows, the only relevant difference between the gravitational and electrostatic cases is that point mass strengths ($m_k$) are generally assumed to be non-negative, but electric change strengths ($q_k$) can have either sign.  Since point source uniqueness results that hold for sources of either sign also clearly hold when the sources are all assumed to be positive and since a historical precedent exists for stating uniqueness results in terms of point masses, in the sequel it will always be assumed that while the scalar point source parameters are called a point masses and denoted $m_k$, their assigned values can take on either sign.  Likewise the vector point source distribution that corresponds to a electrostatic dipole or point mass dipole will be denoted ${\vec{D}}_k$ and it will be called a point dipole.  Higher order point multipoles will also be considered in what follows and point masses, point dipoles and point multipoles will be collectively referred to as point sources.  In all cases, since either sign is allowed for these point sources and since uniqueness results pertain to whether the source terms form a linearly dependent set of basis functions in the exterior region of interest, the overall choice of sign convention for these basis functions and associated point source strengths does not matter here and the basis sign conventions are chosen for convenience.  As previously noted, while point source  distributions for \R{N} for $N > 3$ exist and the results here can be naturally extended to handle them they will not be explicitly considered here.  Finally, observe that magnetic dipoles have a different form from the gravitational or electrostatic (and magnetostatic) potential forms assumed here and the question of adapting the results here to their study will also not be addressed.

  First, as a general convention, let $N_k$ denote a finite integer greater than zero and let the subscript $k$ (or $k'$) be used to index the $N_k$ point sources under consideration so that $k$ and $k' = 1,\,2,\,3,\,\ldots,\,N_k$ is always understood.   To further fix notation in \R{N} for $N \geq 2$, let the $N_k$ fixed distinct point-source locations be specified by ${\vec{X}}_k$ so that ${\vec{X}}_k \neq {\vec{X}}_{k'}$ when $k' \neq k$.  It will also be assumed that all of the sources are located in some bounded domain.

  Introducing a specific symbol for point mass potentials, $V$, instead of the general symbol $U$ used in (\ref{E:grad}) results in the following definition 
\begin{equation}\label{E:PM2}
V(\vec{X}) \equiv \sum\limits_{k=1}^{N_k}\ m_k\,\ln\ (|{\vec{X}} - {\vec{X}}_k|^{-1})
\end{equation}
in \R{2} and 
\begin{equation}\label{E:PM3}
V(\vec{X}) \equiv \sum\limits_{k=1}^{N_k}\ \frac{m_k\ \ }{|{\vec{X}} - {\vec{X}}_k|}
\end{equation}
in \R{3}. 

   Likewise the point dipole potential form is
\begin{equation}\label{E:DP2}
W(\vec{X}) \equiv \sum\limits_{k=1}^{N_k}\ {\vec{D}}_k\mathbf{\cdot}\,{\mathbf{\nabla}}\,\,\ln\ (|{\vec{X}} - {\vec{X}}_k|^{-1})
\end{equation}
in \R{2} and
\begin{equation}\label{E:DP3}
W(\vec{X}) \equiv \sum\limits_{k=1}^{N_k}\ {\vec{D}}_k\mathbf{\cdot}\,{\mathbf{\nabla}}\,({|{\vec{X}} - {\vec{X}}_k|^{-1}})
\end{equation}
in \R{3}.  The relationships of these potentials and their corresponding complex analytic counter parts are shown in Table~\ref{Ta:Sources}.  These correspondences will be addressed below.
\begin{table} 
\begin{center}
\begin{tabular}{|c|c|||c|c|}\hline
\R{N} Description   &  \R{N} Symbol & \C\ Analog         & \C\ Symbol   \\ \hline\hline
 Harmonic Functions &     $U$       & Analytic Functions &    $f$       \\ \hline
 Point Masses       &   $\!V$       & Logarithm Terms    &    $g$       \\ \hline
 Point Dipoles      &     $W$       & Simple Poles       &    $h$       \\ \hline
 Multipoles         & $\ \ H^{(n)}$ & Higher Order Poles &    $h^{(n)}$ \\ \hline
\end{tabular}
\caption{Corresponding Types of Potential and Analytic Functions}\label{Ta:Sources}
\end{center}
\end{table}

  Next consider uniqueness results stated in terms of scalar potentials.  So long as Laplace's equation is satisfied in the region of interest the exact shape of the exterior region is immaterial due to the uniqueness of Dirichlet boundary value problems so, without loss of generality, assume that the mass distributions are located in some bounding sphere.  Further, since the origin of coordinates and the length scale also does not change, for the desired final uniqueness results it can be assumed that the harmonic region of interest is $|{\vec{X}}| \geq 1$ and that the sources are in the compliment of this region: $0 < |{\vec{X}}_k| < 1$ (where, for later convenience, it is assumed that the origin is not situated directly over any particular source).  Point mass uniqueness requirements can then be stated as the condition that if $V(\vec{X}) = 0$, for all ${\vec{X}} \geq 1$, then $m_k = 0$ for all $k$ and that the converse also holds.  Which is to say that for any finite $N_k > 0$, if $m_k \neq 0$, for all $k$, then $V(\vec{X}) \neq 0$ for some values of ${\vec{X}} \geq 1$.  Dipole uniqueness can be stated similarly:  If $V(\vec{X}) = 0$, for all $|{\vec{X}}| \geq 1$, then ${\vec{D}}_k = 0$ for all $k$, or conversely as the condition that for any finite $N_k > 0$ if ${\vec{D}}_k \neq 0$, for all $k$, then $V(\vec{X}) \neq 0$.  It is assumed throughout that any nonzero values of $m_k$ and ${\vec{D}}_k$ are bounded (while the fixed nature of ${\vec{X}}_k$ rules out the formation of dipoles in the limit $m_k \rightarrow \infty$ and $m_{k'} \rightarrow -\infty$ with $|m_k| = |m_{k'}|$ for some pair of points indexed by $k$ and $k'$, the unbounded mass case is still best bypassed).

  Several observations are relevant.  By introducing the $\mathbb{R}^2$ basis functions 
\begin{equation}\label{E:PsiK}
{\Psi}_k(\vec{X}) = \ln\,\frac1{|{\vec{X}} - {\vec{X}}_k|} - \ln\,\frac1{|{\vec{X}}|} =  \ln\,\frac{|{\vec{X}}|}{|{\vec{X}} - {\vec{X}}_k|}\ ,
\end{equation}
 Equation~(\ref{E:PM2}) can be reexpressed as  
\begin{equation}\notag
V(\vec{X}) = \sum\limits_{k=1}^{N_k}\ m_k{\Psi}_k(\vec{X})\ +  m_0\,\ln\ \frac1{\vec{X}} 
\end{equation}
where $m_0 = \sum_{k=1}^{N_k} m_k$.  The case $m_0 \neq 0$ can be easily disposed of since ${\Psi}_k \rightarrow 0$ as ${|\vec{X}| \to \infty}$ and it is clear that $|V| \rightarrow \infty$ as ${|\vec{X}| \to \infty}$ unless $m_0 = 0$.  Consequently, without loss of generality take $m_0 = 0$, so that the form 
\begin{equation}\label{E:PMPsi}
V(\vec{X}) = \sum\limits_{k=1}^{N_k}\ m_k{\Psi}_k(\vec{X})
\end{equation}
is always assumed for $\mathbb{R}^2$ in what follows.  Likewise all of the harmonic functions considered here will be assumed to vanish at infinity by convention.

  In $\mathbb{R}^N$, uniqueness results for a scalar potential imply uniqueness results for the vector field itself as can be seen from the following argument.  Consider the line integral
\begin{equation}\label{E:LineInt}
  U(\vec{X}) = U({\vec{X}_o}) + \int_{\vec{X}_o}^{\vec{X}} {\mathbf{\nabla}}U\mathbf{\cdot}\,d\,\vec{\ell}
\end{equation}
where $\vec{\ell}= \vec{\ell}(s)$ denotes a parameterized path (with arclength $s$).  The contention is that since $\vec{X}$ and ${\vec{X}_o}$ are arbritrary points in the exterior region and $\vec{\ell}(s)$ is also assumed to lie wholly in this exterior region, $\vec{F} \neq 0 \iff V \neq 0$ and conversely $\vec{F} = 0 \iff U = 0$, where $\vec{F}$ and $U$ are related by (\ref{E:grad}). For example, if $\vec{F}(\vec{X}') < 0$ at some point $\vec{X}' > 1$, then due to the mean value theorem for harmonic functions $\mathbf{\nabla} U > 0$ holds in some finite neighborhood of $\vec{X}'$ so that both $\vec{X}$ and ${\vec{X}_o}$, along with the path connecting them in (\ref{E:LineInt}), can be taken to be inside this same neighborhood.  This, in turn, means that at least one of the two values $U(\vec{X})$ or $U({\vec{X}_o})$ must be nonzero. Alternatively, if $U(\vec{X}) > 0$, then ${\vec{X}_o}$ can be set to a point at infinity and from  (\ref{E:LineInt}) it is clear that $\vec{F} \neq 0$ must occur at someplace along the line integral.  Without loss of generality, uniqueness results will thus be stated in terms of scalar potentials for convenience.

  Uniqueness in the complex setting is addressed first since this is the easiest route to the desired \R{2} results, which can be readily generalized to \R{3}.  It is useful to have a common (and commonly used) symbolism for addressing uniqueness issues in \R{2} and \C.  Let $x$ and $y$ denote standard Cartesian coordinates in either setting: In \R{2}, ${\vec{X}} \equiv (x,\,y)^T$ and ${\vec{X}}_k \equiv (x_k,\,y_k)^T$ (where $T$ denotes the transpose); while in \C, $z = x + i\,y$ and $z_k = x_k + i\,y_k$.  In both settings $\sqrt{x^2 + y^2} \geq 1$ and  ${\sqrt{x_k^2 + y_k^2}} < 1$.  Recall from elementary treatments of analytic functions that there is a general mapping between harmonic functions defined over some subregion of \R{2} and analytic functions defined over the corresponding subregion of the complex plane as indicated by Table~\ref{Ta:Sources}.  This mapping can be done uniquely when certain reasonable conditions are met with regards to branch-cuts and the nature of the region under consideration and it is assumed that the reader is familiar with them.  Specifically, if $U$ is harmonic in \R{2} let $f(z)$ denote the unique analytic function in \C\ whose real component corresponds to $U(x,\,y)$, in which case $f(z)$ will be called the standard completion of $U(x,\,y)$.  The standard completion of the sum of real functions obeys the principle of linear superposition, so that, for example, the standard completion of $V$ is a linear superposition of terms that are the standard completions of the ${\Psi}_k(\vec{X})$.  Specifically, let standard completions of ${\Psi}_k(\vec{X})$ be denoted ${\psi}_k(z)$.  Since the real part of  $\ln z$ is $\ln |z|$, it is obvious from (\ref{E:PsiK}) that the standard completion of ${\Psi}_k$ is given by
\begin{equation}\label{E:psi}
 {\psi}_k(z) \, = \,\ln\,\frac z{(z - z_k)}\, \eq  \,\ln\,\frac1{z - z_k} - \,\ln\,\frac1{z}
\end{equation}
and thus (using the notation indicated in Table~\ref{Ta:Sources} for the corresponding complex logarithmic case) $g(z)$ can be written as
\begin{equation}\label{E:complexLogs} 
  g(z) \equiv \sum\limits_{k=1}^{N_k} {\mu}_k\, {\psi}_k(z)\ , 
\end{equation}
where ${\mu}_k \in \mC$.

  As before, a general series based on $\ln\,[1/{z - z_k}]$ can be contemplated here since it corresponds to adding a term involving $1/z$ to (\ref{E:complexLogs}); however, it is readily proved that when this term is present it always causes $g(z)\neq 0$ for large $z$ and so its presence need not be considered further.  This might all seem straightforward, but even here some care is called for.  Moreover, since the basic strategy used in the sequel will be to prove uniqueness (i.e., linear independence) in one setting and then to obtain uniqueness results in all other settings by applying a linear uniqueness preserving mapping, the general theorems that are required are best stated explicity (for clarity and uniformity of exposition).   Even though the underlying concepts are well known, these linear independence preserving theorems do not necessarily take conventional forms.
 
\section{Uniqueness Preserving Mappings}\label{S:Umaps}

  This section discusses uniqueness preserving mappings that are used in the sequel.  As noted at the end of the last section, the first of these mappings is associated with the act of standard analytic completion and entails a unique correspondence between analytic functions in the complex pane and harmonic functions in \R{2}.  Since uniqueness results will first be shown in the complex setting and then mapped into the \R{2} setting, first consider what uniqueness means in the complex setting:

\vskip 9pt

\noindent
{\bf{Definition \ref{S:Umaps}.1}}\ \ 
A set of basis functions ${\{f_k\}}_{k=1}^{N_k}$ is said to produce a \emph{unique} expansion in the complex setting when they meet the following criteria.  First each basis function must be a bounded analytic function for $|z| \geq 1$.  Second each basis function must vanish at infinity.  Third, a sum of the form 
 \begin{equation}\label{E:Fseries}
f(z) = \sum\limits_{k=1}^{N_k} {\mu}_k f_k(z)\ \ \text{with}\ \ {\mu}_k \in C
\end{equation}
 must be linearly independent for $|z| \geq 1$, where linear independence means that $f(z) = 0$ holds if and only if ${\mu}_k = 0$ for all $k$.

\vskip 9pt

Let ${\text{Re}}\,\{f\}$ denote the real part of $f$ and ${\text{Im}}\,\{f\}$ the imaginary part, then if $u_k(x,\,y) \equiv {\text{Re}}\,\{f_k(z)\}$, $v_k(x,\,y) \equiv {\text{Im}}\,\{f_k(z)\}$, ${\alpha}_k \equiv {\text{Re}}\,\{{\mu}_k\}$ and ${\beta}_k \equiv {\text{Im}}\,\{{\mu}_k\}$; from $f_k(z) = u_k(x,\,y) + i\,v_k(x,\,y)$  and ${\mu}_k = {\alpha}_k  + i\,{\beta}_k$ it follows that
\begin{equation}\label{E:ReF}
{\text{Re}}\,\{f\} = \sum\limits_{k=1}^{N_k} {\text{Re}}\,\{{\mu}_k f_k(z)\}\ = \sum\limits_{k=1}^{N_k} [{\alpha}_k\,u_k(x,\,y) - {\beta}_k\,v_k(x,\,y)].
\end{equation}
Since ${\mu}_k = 0$ for all $k$ implies that both ${\alpha}_k = 0$ and ${\beta}_k = 0$ hold for all $k$, uniqueness of the analytic set of basis functions ${\{f_k\}}_1^{N_k}$, implies simultaneous uniqueness of the pair of conjugate harmonic basis function sets ${\{u_k(x,\,y)\}}_{k=1}^{N_k}$ and ${\{v_k(x,\,y)\}}_{k=1}^{N_k}$.  Uniqueness in the \R{2} setting is defined analogously to Definition~\ref{S:Umaps}.1. 

 This result can be summarized as a theorem:
\begin{theorem}\label{T:CtoRmap}
If the set of basis functions ${\{f_k\}}_{k=1}^{N_k}$ are unique in the complex setting and if $u_k(x,\,y) \equiv {\text{Re}}\,\{f_k(z)\}$ and $v_k(x,\,y) \equiv {\text{Im}}\,\{f_k(z)\}$, then the combined set of basis functions ${\{u_k(x,\,y)\}}_{k=1}^{N_k}\cup{\{v_k(x,\,y)\}}_{k=1}^{N_k}$ are linearly independent or unique in the \R{2} setting. 
\end{theorem}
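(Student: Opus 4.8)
The plan is to reduce the desired real linear independence to the complex linear independence that is assumed, by choosing complex coefficients whose real and imaginary parts are exactly the two real coefficients attached to each index. Concretely, suppose there are real numbers $a_k$ and $b_k$ with $\sum_{k=1}^{N_k} a_k\,u_k(x,y) + \sum_{k=1}^{N_k} b_k\,v_k(x,y) = 0$ throughout the exterior region $|z| \geq 1$; I must show every $a_k$ and $b_k$ vanishes. Setting ${\mu}_k \eq a_k - i\,b_k$ (so that ${\alpha}_k = a_k$ and ${\beta}_k = -b_k$ in the notation preceding (\ref{E:ReF})) and forming $f \eq \sum_k {\mu}_k f_k$, formula (\ref{E:ReF}) gives exactly $\text{Re}\{f\} = \sum_k [a_k\,u_k + b_k\,v_k] = 0$ on $|z| \geq 1$. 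Thus a real dependence among the $u_k$ and $v_k$ is packaged as a complex combination of the $f_k$ with vanishing real part.

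Next I would argue that a vanishing real part forces $f$ itself to vanish. Since each $f_k$ is analytic on the open connected exterior $|z| > 1$, so is the finite combination $f$; writing $f = u + i\,v$ with $u = \text{Re}\{f\} \equiv 0$, the Cauchy--Riemann equations give $v_x = -u_y = 0$ and $v_y = u_x = 0$, so $v$ is constant on the connected region $|z| > 1$ and hence $f$ equals a pure imaginary constant there. Because each $f_k$ vanishes at infinity by Definition~\ref{S:Umaps}.1, so does $f$, which forces that constant to be zero; thus $f \equiv 0$ on $|z| > 1$, and by continuity on all of $|z| \geq 1$. Finally, $f$ inherits analyticity, boundedness and vanishing at infinity from the $f_k$, so it meets the hypotheses of the complex uniqueness criterion, and the assumed independence of $\{f_k\}$ yields ${\mu}_k = 0$, i.e. $a_k = b_k = 0$, for every $k$. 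Since this exhausts the full list of $2N_k$ coefficients, the combined set is linearly independent in the \R{2} setting.

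The one step that requires genuine care, rather than mere bookkeeping, is the passage from $\text{Re}\{f\} = 0$ to $f = 0$. A naive reading might suggest that the real and imaginary parts of $f$ could be controlled independently, but analyticity couples them through the Cauchy--Riemann equations, and it is exactly the combination of analyticity on a \emph{connected} domain together with the vanishing-at-infinity normalization that rules out a residual nonzero imaginary constant. This is why the three standing conditions in Definition~\ref{S:Umaps}.1 (analytic, bounded, vanishing at infinity) are substantive and not decorative: dropping the vanishing-at-infinity requirement would leave the ambiguity $f \mapsto f + i\,c$, which would break the reduction. Connectedness of $|z| > 1$ is likewise what legitimizes concluding that the harmonic conjugate $v$ is globally constant rather than merely locally so.
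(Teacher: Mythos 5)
Your proposal is correct and follows essentially the same route as the paper: a real dependence among the $u_k$ and $v_k$ is packaged via (\ref{E:ReF}) as a complex combination $f=\sum_k \mu_k f_k$ with $\operatorname{Re}\{f\}=0$, and complex uniqueness then forces $\mu_k=0$. The one difference is in your favor: the paper asserts the implication without comment, whereas you explicitly justify the key step $\operatorname{Re}\{f\}=0\Rightarrow f\equiv 0$ via the Cauchy--Riemann equations, connectedness of $|z|>1$, and the vanishing-at-infinity normalization from Definition~\ref{S:Umaps}.1.
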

\noindent
Notice that since the act of standard harmonic completion is unique, if uniqueness can be shown in \R{2} for a sequence of harmonic conjugate pairs then uniqueness for basis functions of the form (\ref{E:Fseries}) follows. 

For the complex setting, in what follows only two general types of basis functions will need to be considered: (1) Logarithmic potentials discussed above of the form ${\psi}_k(z)$ (which, as discussed latter, are analytic from the branch-cut considerations in the first part of Appendix~A).  (2) Poles of the form ${\mu}_k/(z - z_k)^n$, for finite $n > 0$.  For $f(z) = \sum_{k=1}^{N_k} {\mu}_k {\psi}_k(z)$, the restriction ${\beta}_k = 0$ can be made since the argument dependent parts of the logarithmic term occuring in (\ref{E:psi}) [i.e., ${\text{Im}}\,\{{\psi}_k(z)\}$] are not of general interest in the \R{2} harmonic setting.  Uniqueness results in the complex plane for a series of logarithmic basis functions can thus be used to show point mass uniqueness in \R{2}.

  Next consider a series of poles of fixed order $n$: 
\begin{equation}\label{E:Hseries}
h^{(n)}(z) = \sum\limits_{k=1}^{N_k} \frac{{\mu}_k}{(z - z_k)^n}\ \ \text{with}\ \ {\mu}_k \in C\ .
\end{equation}
The case $n = 1$,  $h(z) \equiv h^{(1)}(z)$, is of special interest and results in the well-known linear combination of simple poles: 
\begin{equation}\label{E:SimplePoles}
h(z) = \sum\limits_{k=1}^{N_k} \frac{{\mu}_k}{(z - z_k)} \ .
\end{equation}
Uniqueness results will first be shown for an expansion of this form.
Since 
\begin{equation}\label{E:PoleExpansion}
 \frac{{\mu}_k}{z - z_k} = \frac{{\mu}_k(z^* - z^*_k)}{|z - z_k|^2}  = \frac{{\alpha}_k(x -x_k) + {\beta}_k(y -y_k)}{{(x - x_k)^2 + (y -y_k)^2}} +  i\,\frac{{\beta}_k(x -x_k) - {\alpha}_k(y -y_k)}{{(x - x_k)^2 + (y -y_k)^2}}
\end{equation}
the real part of $h(z)$ corresponds to an \R{2} dipole expansion $W(\vec{X})$ given by (\ref{E:DP2}), as will now be explicitly shown.

  The various point dipole terms contained in (\ref{E:DP2}) are also known as first order multipoles.  For future reference it is useful to have a consistent notation for delineating multipole basis functions of various orders.  The order of a multipole corresponds to the number of subscripts it has, so a first order multipole has a single subscript that can take on the values $1,\,2,\,3,\,\ldots,\,N$ in \R{N}.  Thus the subscripts of a dipole (or first order multipole) basis function are associated with the various directions in \R{N}.  In \R{2} this subscript takes on two values so that the two first order \R{2} multipole basis functions associated with the source position ${\vec{X}}_k$ can be written as ${\Mop}_j^{{}_{[k]}} \equiv {\Mop}_j({\vec{X}},\,{\vec{X}}_k)$ for $j = 1,\,2$.  The \R2 dipole or first order multipole basis function oriented along the $x$-axis in is given by
\begin{equation}\label{E:Mk1}
{\Mop}_1^{{}_{[k]}} \equiv \frac{\partial \ }{\partial x} \ln \, \frac1{\sqrt{(x - x_k)^2 + (y -y_k)^2}} = - \frac{x -x_k}{{(x - x_k)^2 + (y -y_k)^2}}\ .
\end{equation}
  Likewise a unit dipole or multiple aligned along the $y$-axis in \R2 is given by
\begin{equation}\label{E:Mk2}
{\Mop}_2^{{}_{[k]}} \equiv \frac{\partial \ }{\partial y} \ln \, \frac1{\sqrt{(x - x_k)^2 + (y -y_k)^2}} = - \frac{y -y_k}{{(x - x_k)^2 + (y -y_k)^2}}\ .
\end{equation}
Thus uniqueness results in the complex plane for a series of simple poles can be used to show  dipole uniqueness in the \R{2} setting. Theorem~\ref{T:CtoRmap} (and its reverse) clearly involves a change of setting form \C\ to \R{2} (or \R{2} to \C)

\section{Uniqueness of Complex Poles of The Same Order}\label{S:Complex}

This section considers relatively straightforward results about uniqueness of expansions of poles and uniqueness of expansions of logarithmic basis functions.  In particular, the theorem that shows that an expansion in terms of simple poles, $h(z)$ as given by (\ref{E:SimplePoles}), is unique is readily stated and proved.  The analogous result is also easy to prove for expansions of higher-order poles of a given type (i.e., where all the poles are of some specified order), as well logarithmic basis functions.  A consideration of the more difficult uniqueness results for mixed types of expansions are postponed until Section~\ref{S:Mixed} and, in the end, concrete proofs of these mixed uniqueness results prove to be elusive.

Before proceeding to a statement of the desired theorem, several observations are in order.  First, if $z_k = 0$ for any $k$, then it is simply necessary to translate and rescale, when required, so that $z_k \neq 0$ can be assumed.  Second, as in \R{3} for concentric spheres, for uniform circular distributions of continuous simple poles in \C\ and \R{2} immediate counter examples can be constructed.  Third, in an attempt to derive the wanted uniqueness results, it is tempting to try to directly apply the standard theory of poles and residues associated with analytic function theory.  For example, applying the residue theorem by taking a closed line integral around the unit disk immediately shows that $\sum_{k=1}^{N_k} {\mu}_k = 0$; however, additional progress quickly becomes difficult since, in order to make further progress, it is necessary to consider paths that extend into the interior region, where some form of analytic continuation inside the unit disk must be used, but this is, at best, questionable.  These issues are addressed further in Appendix~A.

For simple poles the desired uniqueness theorem is:
\begin{theorem}\label{T:SmpPoleUniq}
If $h(z)$ has the form specified by (\ref{E:SimplePoles}) where $N_k$ is finite, $0 < |z_{k}| < 1$ and $z_{k'} \neq z_k$ for $k' \neq k$, then $h(z) = 0$ for all $|z| \geq 1$ if and only if ${\mu}_k = 0$ for all $k$. 
\end{theorem}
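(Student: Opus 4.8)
The plan is to prove the nontrivial (forward) implication by expanding $h(z)$ in inverse powers of $z$ and reducing the vanishing condition to a Vandermonde linear system; the converse is immediate, since $\mu_k = 0$ for all $k$ trivially yields $h(z) \equiv 0$. First I would exploit the fact that every source satisfies $|z_k| < 1$: setting $\rho \equiv \max_k |z_k| < 1$, for each $k$ and every $z$ with $|z| \ge 1$ one has $|z_k/z| \le |z_k| \le \rho < 1$, so the geometric expansion
\begin{equation}\notag
\frac{1}{z - z_k} = \frac{1}{z}\,\frac{1}{1 - z_k/z} = \sum_{n=0}^{\infty} \frac{z_k^{\,n}}{z^{\,n+1}}
\end{equation}
converges, and in fact converges uniformly on $|z| \ge 1$. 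Summing over the finitely many $k$ in (\ref{E:SimplePoles}) and interchanging the finite sum with the absolutely convergent series gives
\begin{equation}\notag
h(z) = \sum_{n=0}^{\infty} \frac{c_n}{z^{\,n+1}}\,, \qquad c_n \equiv \sum_{k=1}^{N_k} \mu_k\, z_k^{\,n}\,.
\end{equation}

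Next I would extract these coefficients. The displayed series is the Laurent expansion of $h$ valid in the annulus $\rho < |z| < \infty$, and $h$ vanishes on the sub-annulus $|z| > 1$; hence each coefficient, recovered by $c_n = \frac{1}{2\pi i}\oint_{|z| = R} z^{\,n} h(z)\,dz$ for any $R > 1$, must vanish because $h \equiv 0$ on that contour. (Equivalently, the identity theorem forces $h \equiv 0$ throughout the annulus, which kills every coefficient at once.) The hypothesis therefore produces the infinite system $\sum_{k=1}^{N_k}\mu_k z_k^{\,n} = 0$ for $n = 0, 1, 2, \ldots$.

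Finally I would retain only the first $N_k$ of these equations, namely those for $n = 0, 1, \ldots, N_k - 1$. In matrix form this reads $V\mu = 0$, where $\mu = (\mu_1, \ldots, \mu_{N_k})^T$ and $V$ is the $N_k \times N_k$ Vandermonde matrix built from the nodes $z_1, \ldots, z_{N_k}$. Its determinant $\prod_{k < k'} (z_{k'} - z_k)$ is nonzero precisely because the hypothesis guarantees $z_{k'} \neq z_k$ for $k' \neq k$. Hence $V$ is invertible, so $\mu = 0$, i.e. $\mu_k = 0$ for all $k$, completing the forward direction.

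The individual steps are routine; the one place deserving genuine care—and the step I expect to be the main (if modest) obstacle—is the rigorous passage from ``$h = 0$ on $|z| \ge 1$'' to ``all coefficients $c_n$ vanish,'' since this is exactly where analyticity does the real work and where the naive residue-theorem attempt sketched before the theorem statement stalls. Isolating this as a clean coefficient-extraction argument (via the contour integral or the identity theorem) is also what makes the method portable: for poles of fixed order $n$ as in (\ref{E:Hseries}) the same expansion yields a confluent, generalized Vandermonde system, which is the route the later sections are set up to exploit.
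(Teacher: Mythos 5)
Your proposal is correct and follows essentially the same route as the paper's proof: geometric-series expansion of each $1/(z-z_k)$ in powers of $1/z$, uniqueness of the resulting coefficients $\sum_k \mu_k z_k^{\,n}$ forcing them all to vanish, and then inverting the $N_k \times N_k$ Vandermonde system built from the distinct nodes. Your explicit contour-integral (or identity-theorem) justification of the coefficient-extraction step merely makes rigorous what the paper disposes of by citing the uniqueness of power-series coefficients.
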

\begin{proof}
Here ${\mu}_k = 0$ for all $k$ trivially implies $h(z) = 0$ for all $z$, so only the converse needs to be considered.  Throughout the proof assume that $0 < |z_{k}| < 1$, $z_{k'} \neq z_k$ for $k' \neq k$ and that $|z| \geq 1$.  The proof will be by contradiction, so assume to the contrary that a non-unique expansion exists where ${\mu}_k \neq 0$, but $h(z) = 0$ for all $|z| \geq 1$.  If ${\mu}_k = 0$  occurs for any $k$ in this expansion, then drop out these terms, reindex the ${\mu}_k$'s and reduce the value of $N_k$, so that  ${\mu}_k \neq 0$ can be assumed to hold for all $k$ without loss of generality.  Using the geometric series, each of the pole terms appearing in (\ref{E:SimplePoles}) can be reexpressed as 
\begin{equation}\label{E:geoseries}
\frac{1}{z - z_k} = \sum\limits_{n=0}^{\infty} \frac{z_k^n}{z^{n+1}}\ ,
\end{equation}
which has the same overall form as the power series
\begin{equation}\label{E:series1}
 f(z) = \sum\limits_{n=1}^{\infty} \frac{a_n}{z^n}\,,\,\ \text{where}\,\  a_n \in \mC.
\end{equation}
As pointed out in many elementary treatments of complex variables, $f(z) = 0$ for all $z$ if and only if $a_n = 0$ for all $n \geq 1$, which is a useful condition here.  Substituting (\ref{E:geoseries}) into (\ref{E:SimplePoles}) allows $h$ to be rewritten in the form:
\begin{equation}\label{E:series2}
 h(z) = \sum\limits_{j=1}^{\infty} b_jz^{-j}\,,\ \text{where}\  b_j =  \sum\limits_{k=1}^{N_k} z_k^{j-1}\,{\mu}_k\ .
\end{equation}
By assumption $h(z) \equiv 0$ holds for some given $z_k$'s and ${\mu}_k$'s with $z_k \neq z_{k'}$ and ${\mu}_k \neq 0$ for $k = 1\,2,\,3,\,\cdots\,N_k$ with $N_k > 0$.   Also as noted above, the factors $b_j$ occuring here are unique so $b_j \equiv 0$ for $j = 0,\,1,\,2,\,3,\,\cdots\,$.  From (\ref{E:series2}) this condition on the $b_j$'s can be rewritten in matrix form as
\begin{equation}\label{E:unique}
 \widetilde{\mathbf{G}}\,{\mathbf{\mu}} = 0\ {\text{where}}\ \  
   \widetilde{\mathbf{G}} \eq
        \begin{pmatrix}
               1      & 1      & 1      & \cdots & 1\\
               z_1    & z_2    & z_3    & \cdots & z_{N_k}\\
               z_1^2  & z_2^2  & z_3^2  & \cdots & z_{N_k}^2\\
               z_1^3  & z_2^3  & z_3^3  & \cdots & z_{N_k}^3\\
               \vdots & \vdots & \vdots &        & \vdots
        \end{pmatrix}
\end{equation}
and where ${\mathbf{\mu}} = ({\mu}_1,\,{\mu}_2,\,{\mu}_3,\,\cdots,\,{\mu}_{N_k})^T$.  Consider the square matrix formed by the first $N_k$ rows and $N_k$ columns of $\widetilde{\mathbf{G}}$.  This square matrix is the Vandermode matrix, which has a determinant that is well known to be nonzero when the $z_k$ are distinct \cite{PJDavis}.  Thus the only solution to (\ref{E:unique}) is ${\mathbf{\mu}} = 0$, contrary to our original assumption, proving the uniqueness of the form given by (\ref{E:SimplePoles}).
\end{proof}

  An analogous result can easily be shown for the logarithmic form given by (\ref{E:complexLogs})
\begin{equation}\label{E:complexLogs2} 
  g(z) \eq \sum\limits_{k=1}^{N_k} {\rho}_k\, {\psi}_k(z)\ , 
\end{equation}
where $\rho \in \mathbb{C}$ is the source parameter and, as before,
\begin{equation}\label{E:psi4}
 {\psi}_k(z) \, = \,\ln\,\frac z{(z - z_k)}\ .
\end{equation}
Here, since $z_k = 0$ implies ${\psi}_k(z) = 0$, so $z \neq 0$ and $|z| \geq 1 > |z_k| > 0$ is assumed (as usual).
Notice that from the discussion given in Appendix~A, ${\psi}_k(z)$ has no branch cuts for $|z| \geq 1$, which means that it is analytic for $|z| \geq 1$ and thus that it has a proper power series representation:
\begin{equation}\label{E:PsiSeries} 
   {\psi}_k(z) = \sum\limits_{n=1}^{\infty} \frac{z_k^n}{nz^n}\ . 
\end{equation}
Although (\ref{E:PsiSeries}) can be obtained in various ways, it is easy to see that it is the correct series by simply comparing the derivative of the RHS of (\ref{E:PsiSeries}) with the series expansion of the derivative of the RHS of (\ref{E:psi4}).

 Substituting the series expansion (\ref{E:PsiSeries}) into  (\ref{E:complexLogs2}) gives 
\begin{equation}\label{E:complexLogs3} 
  g(z) = \sum\limits_{k=1}^{N_k} \,\sum\limits_{n=1}^{\infty} \frac{z_k^n{\rho}_k}{nz^n}\ , 
\end{equation}
and setting the various powers of $z^n$ to zero yields an equation set similar to (\ref{E:unique}) that must hold if $g(z) = 0$ is to hold:
\begin{equation}\label{E:uniqueG}
 \widetilde{\mathbf{G}}'\,{\mathbf{\rho}} = 0\ {\text{where}}\ \  
   \widetilde{\mathbf{G}}' \eq
        \begin{pmatrix}
               z_1    & z_2    & z_3    & \cdots & z_{N_k}\\
               z_1^2/2  & z_2^2/2  & z_3^2/2  & \cdots & z_{N_k}^2/2\\
               z_1^3/3  & z_2^3/3  & z_3^3/3  & \cdots & z_{N_k}^3/3\\
               \vdots & \vdots & \vdots &        & \vdots
        \end{pmatrix}
\end{equation}
and where ${\mathbf{\rho}} = ({\rho}_1,\,{\rho}_2,\,{\rho}_3,\,\cdots,\,{\rho}_{N_k})^T$.  As before, it is necessary to consider only the first $N_k$ rows of $\widetilde{\mathbf{G}}'$, which can be reexpressed as a simple matrix product of the Vandermode matrix $\mathbf{G}$ and two other $N_k \times N_k$ matrices. If particular, if $\mathbf{G}'$ denotes this $N_k \times N_k$ matrix, then 
\begin{equation}\label{E:uniqueG2}
 {\mathbf{G}}' \eq \mathbf{N}^{-1}\mathbf{G}\mathbf{X}\,,\ {\text{where}}\ \  
  {\mathbf{N}} \eq
        \begin{pmatrix}
               1  &  0  & 0  & \cdots & 0\\
               0  &  2  & 0  & \cdots & 0\\
               0  &  0  & 3  & \cdots & 0\\
               \vdots & \vdots & \vdots &        & \vdots\\
               0  &  0  & 0  & \cdots & N_k\\
        \end{pmatrix}
\ \ {\text{and}}\ \ 
 {\mathbf{X}} \eq
        \begin{pmatrix}
               z_k  &  0  & 0  & \cdots & 0\\
               0  &  z_2  & 0  & \cdots & 0\\
               0  &  0  & z_3  & \cdots & 0\\
               \vdots & \vdots & \vdots &        & \vdots\\
               0  &  0  & 0  & \cdots & z_{N_k}\\
        \end{pmatrix}
\,\, .
\end{equation}
 The condition $\widetilde{\mathbf{G}}'\,{\mathbf{\rho}} = 0$ thus becomes $\mathbf{N}^{-1}\mathbf{G}\mathbf{X}\,{\mathbf{\rho}} = 0$, which immediately implies ${\mathbf{\rho}} = 0$ since all three matrices involved are invertible.  This result can be restated as the desired  uniqueness of logarithmic expansions given by the form (\ref{E:complexLogs}): 
\begin{theorem}\label{T:CmpLogUniq}
If $g(z)$ has the form specified by (\ref{E:complexLogs}), where $N_k$ is finite, $0 < |z_{k}| < 1$ and $z_{k'} \neq z_k$ for $k' \neq k$, then $g(z) = 0$ for all $|z| \geq 1$ if and only if ${\mu}_k = 0$ for all $k$. 
\end{theorem}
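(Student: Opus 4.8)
The plan is to reproduce the strategy of Theorem~\ref{T:SmpPoleUniq}, replacing the geometric series of a simple pole by the power series of the logarithmic term and again reducing everything to the invertibility of a Vandermode-type matrix. The ``if'' direction is trivial: setting every ${\mu}_k = 0$ in (\ref{E:complexLogs}) gives $g(z) \equiv 0$. For the converse I would assume $g(z) = 0$ for all $|z| \geq 1$ and, after discarding any vanishing coefficients and reindexing, that ${\mu}_k \neq 0$ for every $k$; the aim is then to extract from this hypothesis a square homogeneous linear system in the ${\mu}_k$ whose coefficient matrix is invertible, forcing ${\mu} = 0$ and contradicting ${\mu}_k \neq 0$.

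First I would establish the expansion ${\psi}_k(z) = \sum_{n=1}^{\infty} z_k^n/(n z^n)$ of (\ref{E:PsiSeries}), valid for $|z| \geq 1 > |z_k|$. This follows by writing ${\psi}_k(z) = -\ln(1 - z_k/z)$ and applying the standard series $-\ln(1-w) = \sum_{n \geq 1} w^n/n$ with $w = z_k/z$, which converges absolutely since $|z_k/z| < 1$; the same computation shows that $1 - z_k/z$ stays in a disk about $1$ of radius less than one, so ${\psi}_k$ is single-valued and analytic throughout $|z| \geq 1$. Substituting into (\ref{E:complexLogs}) and collecting powers of $z^{-n}$ gives $g(z) = \sum_{n \geq 1} z^{-n}\, n^{-1} \sum_k z_k^n {\mu}_k$. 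Invoking the uniqueness of the coefficients of a series in $z^{-1}$ already used for (\ref{E:series1}), the hypothesis $g \equiv 0$ forces $\sum_k z_k^n {\mu}_k = 0$ for every $n \geq 1$.

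Reading off the first $N_k$ of these equations yields a square system whose coefficient matrix has $(n,k)$ entry $z_k^n/n$. The key step is to notice that this matrix factors as a diagonal matrix times a Vandermode matrix times a second diagonal matrix: pulling the scalar $1/n$ out of row $n$ and the scalar $z_k$ out of column $k$ leaves precisely the Vandermode matrix with entries $z_k^{n-1}$. The two diagonal factors are invertible because their entries $1/n$ and $z_k$ are nonzero (here $z_k \neq 0$ is exactly the hypothesis $0 < |z_k|$), and the Vandermode factor is invertible because the $z_k$ are distinct \cite{PJDavis}. Hence the full coefficient matrix is invertible and ${\mu} = 0$, completing the converse.

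I expect the linear-algebra portion to be entirely routine, being a mild variant of Theorem~\ref{T:SmpPoleUniq}; the only genuinely new ingredient is the power-series step, and the main point to secure there is that (\ref{E:PsiSeries}) is a \emph{bona fide} representation of ${\psi}_k$ on the whole exterior region $|z| \geq 1$ rather than a merely formal expansion---that is, that no branch cut of $\ln[z/(z-z_k)]$ intrudes into $|z| \geq 1$. I would therefore treat the branch-cut analysis of Appendix~A as the one place requiring care; once ${\psi}_k$ is known to be analytic there, the coefficient matching and the Vandermode invertibility argument go through verbatim.
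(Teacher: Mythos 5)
Your proposal is correct and follows essentially the same route as the paper: expand each ${\psi}_k$ via (\ref{E:PsiSeries}) (justified by the branch-cut analysis of Appendix~A), match coefficients of $z^{-n}$, and factor the resulting $N_k\times N_k$ coefficient matrix as $\mathbf{N}^{-1}\mathbf{G}\mathbf{X}$, i.e.\ a diagonal matrix times the Vandermode matrix times a diagonal matrix, all invertible. The only cosmetic difference is that you derive (\ref{E:PsiSeries}) directly from the Mercator series for $-\ln(1-w)$, whereas the paper verifies it by comparing derivatives; both are equivalent.
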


   Next consider the issue of the uniqueness of higher-order pole expansions of the form
\begin{equation}\label{E:Pole1}
h^{(m)}(z) \eq \sum\limits_{k=1}^{N_k} \frac{{\nu}_k}{(z - z_k)^{m}}
\end{equation}
where $m$ is a positive integer, $\nu_k \in \mathbb{C}$ and the usual restrictions apply to $z$ and $z_k$.  A series expansion for $(z - z_k)^{-m} = z^{-m}(1 - z_k/z)^{-m}$ can be found directly from the binomial series
\begin{equation}\label{E:bimomial}
\frac{1}{(1 - x)^{m}} = 1 + mx + \frac{m(m + 1)}{2!}x^2  + \frac{m(m + 1)(m + 2)}{3!}x^3 + \cdots +  \frac{m(m + 1)\cdots(m + n - 1)}{n!}x^n + \cdots \ .
\end{equation}
The general coefficients here are related to the usual binomial coefficients.  Introducing the special (and non-standard) symbol $S_{m,\,n}$ for these signed coefficients and replacing $x$ with $z_k/z$ thus produces 
\begin{equation}\label{E:BinCoef}
\frac{1}{(z - z_k)^{m}} = \frac1{z^m}\,\sum\limits_{n=0}^{\infty} S_{m\,n}{\left(\frac{z_k}{z}\right)}^n\ \ \text{where}\ \ S_{m,\,0} \eq 1\ \ \text{and}\ \ S_{m,\,n} \eq \frac{(m + n - 1)!}{n!(m - 1)!}\ \text{for}\ n > 0\,.
\end{equation}
The expression of interest is thus
\begin{equation}\label{E:Pole2}
h^{(m)}(z) = \frac1{z^m}\,\sum\limits_{n=1}^{\infty}  \frac{S_{m\,n}}{z^n} \sum\limits_{k=1}^{N_k}\, z_k^n{\nu}_k = 0\ ,
\end{equation}
which leads to the following matrix condition for the coefficients:
\begin{equation}\label{E:Pole3}
\mathbf{G}\mathbf{X}\,{\mathbf{\nu}} = 0
\end{equation}
with ${\mathbf{\nu}} \eq ({\nu}_1,\,{\nu}_2,\,{\nu}_3,\,\cdots,\,{\nu}_{N_k})^T$. The solvability of this matrix equation directly yields a theorem expressing the uniqueness of higher-order pole representations: 
\begin{theorem}\label{T:HigherPoleUniq}
For finite $m > 0$, if $h^{(m)}(z)$ has the form specified by (\ref{E:Pole1}), where $N_k$ is finite, $0 < |z_{k}| < 1$ and $z_{k'} \neq z_k$ for $k' \neq k$, then $h^{(m)}(z) = 0$ for all $|z| \geq 1$ if and only if ${\nu}_k = 0$ for all $k$. 
\end{theorem}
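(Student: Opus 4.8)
The plan is to follow the contradiction argument of Theorem~\ref{T:SmpPoleUniq} almost verbatim, since the development leading to (\ref{E:Pole3}) has already done most of the work. The implication ${\nu}_k = 0 \text{ for all } k \Rightarrow h^{(m)}(z) = 0$ is trivial, so I would only treat the converse. Assuming $h^{(m)}(z) = 0$ for all $|z| \geq 1$ while some ${\nu}_k \neq 0$, I would first discard any vanishing coefficients, reindex, and reduce $N_k$, so that ${\nu}_k \neq 0$ may be assumed for every $k$ without loss of generality.

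Next I would expand each term $(z - z_k)^{-m}$ by the binomial series (\ref{E:BinCoef}); this is valid because $|z| \geq 1 > |z_k|$ forces $|z_k/z| < 1$, so each series converges absolutely. Interchanging the order of summation then collects everything into the single power series in $1/z$ displayed in (\ref{E:Pole2}). Since $h^{(m)}$ is analytic and vanishes identically on the region $|z| \geq 1$, the uniqueness of such power-series representations (already used for (\ref{E:series1})) forces every coefficient to vanish; that is, $S_{m,n}\sum_{k=1}^{N_k} z_k^n {\nu}_k = 0$ for each $n \geq 1$.

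I would then note that the coefficients $S_{m,n}$ defined in (\ref{E:BinCoef}) are strictly positive, so they may be cancelled, leaving $\sum_{k=1}^{N_k} z_k^n {\nu}_k = 0$ for all $n \geq 1$. Writing the first $N_k$ of these equations (for $n = 1, 2, \ldots, N_k$) in matrix form gives exactly $\mathbf{G}\mathbf{X}\,{\mathbf{\nu}} = 0$, as in (\ref{E:Pole3}), where $\mathbf{G}$ is the $N_k \times N_k$ Vandermode matrix and $\mathbf{X} = \text{diag}(z_1, z_2, \ldots, z_{N_k})$. Because the $z_k$ are distinct, the Vandermode determinant is nonzero; because $0 < |z_k|$ for every $k$, the diagonal matrix $\mathbf{X}$ is invertible. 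Hence $\mathbf{G}\mathbf{X}$ is invertible and ${\mathbf{\nu}} = 0$, contradicting ${\nu}_k \neq 0$ and completing the proof.

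I expect no serious obstacle here, as the only features distinguishing this from the simple-pole case are the overall factor $1/z^m$ and the binomial weights $S_{m,n}$, neither of which disturbs the structure of the argument. The two points meriting a moment's care are the justification for reordering the double sum into a single Laurent series (handled by absolute convergence on $|z| > |z_k|$) and the verification that $S_{m,n}$ never vanishes, which is what allows the coefficient conditions to pass cleanly to the invertible Vandermode system $\mathbf{G}\mathbf{X}$.
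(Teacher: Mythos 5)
Your proposal is correct and follows essentially the same route as the paper: binomial expansion of $(z-z_k)^{-m}$ into the Laurent series (\ref{E:Pole2}), vanishing of all coefficients, cancellation of the nonzero $S_{m,n}$, and invertibility of $\mathbf{G}\mathbf{X}$ in (\ref{E:Pole3}) via the Vandermode determinant and the nonvanishing $z_k$. Your added attention to absolute convergence when reordering the double sum and to the positivity of $S_{m,n}$ only makes explicit what the paper leaves implicit.
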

\noindent
Here as $m \longrightarrow \infty$ the theorem no longer holds since $(z - z_k)^{-m}\longrightarrow 0$ for all $|z| \geq 1$.  Also Theorem~\ref{T:HigherPoleUniq} pertains only to poles of the same order.  For very small $N_k$ it is easy to show that expansions of mixed orders of poles are linearly independent by direct algebraic means.

\section{Uniqueness Results for $\mathbb{R}^2$
}\label{S:Real2}

  Comparing (\ref{E:PoleExpansion}), (\ref{E:Mk1}) and (\ref{E:Mk2}) with (\ref{E:SimplePoles}) shows that
\begin{equation}\notag
 {\text{Re}}\,\{h(z)\} =  \sum\limits_{k=1}^{N_k}\ {\vec{D}}_k\mathbf{\cdot}\,{\mathbf{\nabla}}\,\,\ln\ (|{\vec{X}} - {\vec{X}}_k|^{-1})
\end{equation}
where ${\vec{D}}_k = (-{\alpha}_k,\,-{\beta}_k)^T$, which immediately shows that dipole expansions in \R{2} are unique by Theorems~\ref{T:CtoRmap} and \ref{T:SmpPoleUniq}.  In a like fashion, it is clear that Theorem~\ref{T:HigherPoleUniq} in conjunction with Theorem~\ref{T:CtoRmap} (with a passive transformation coefficient used as needed) also implies that higher order \R{2} multipole expansions of order $n$ are unique if they can be fully represented by two basis functions at each location. As noted in Section~\ref{S:intro}, in addition to the $n = 1$ case, this program can be carried out for $n = 2$ and $3$, but for $n > 3$ there are more than two independent multipole basis functions for each ${\vec{X}}_k$ and so this correspondence cannot be one-to-one.  

  First, it is useful to build on the multipole notation introduced in conjunction with (\ref{E:Mk1}) and (\ref{E:Mk2}) by letting an $i$, $i'$, $j$ or $j'$ subscript preceeded by a comma denote $\partial/\partial x$ when the subscript in question takes on the value $1$ and to denote $\partial/\partial y$ when it takes on the value $2$ (in what follows it will be assumed that $i$, $i'$, $j$ and $j'$ always take on the values $1$ to $N$ in \R{N} and that similar partials are implied for \R{N}, but only the \R{2} case will be considered in this section).  Multipole basis functions of any order can then be defined by expressions of the form
\begin{equation}\notag
{\Mop}_{i\,i'\,j\,\ldots\,j'}^{{}_{[k]}} \equiv {\Mop}_{i\,,i'\,,j\,,\ldots,\,j'}^{{}_{[k]}}
\end{equation}
where (as noted before) the number of subscripts corresponds to the order of the multipole.
For $n = 2$ there are only two independent basis functions, say ${\Mop}_{1\,1}^{{}_{[k]}}$ and ${\Mop}_{1\,2}^{{}_{[k]}}$ since obviously ${\Mop}_{2\,1}^{{}_{[k]}} = {\Mop}_{1\,2}^{{}_{[k]}}$ and from Laplace's equation ${\Mop}_{2\,2}^{{}_{[k]}} = - {\Mop}_{1\,1}^{{}_{[k]}}$.  Likewise for $n = 3$, if ${\Mop}_{1\,1\,1}^{{}_{[k]}}$ and ${\Mop}_{2\,2\,2}^{{}_{[k]}}$ are selected as the two independent basis functions, then ${\Mop}_{1\,1\,2}^{{}_{[k]}} = -{\Mop}_{2\,2\,2}^{{}_{[k]}}$, ${\Mop}_{2\,2\,1}^{{}_{[k]}} = -{\Mop}_{1\,1\,1}^{{}_{[k]}}$ and analogous relationships hold for permuted indices.  For $n = 4$ it is easy to check that all of the other basis functions can be obtained from ${\Mop}_{1\,1\,1\,1}^{{}_{[k]}}$, ${\Mop}_{2\,2\,2\,2}^{{}_{[k]}}$ and ${\Mop}_{1\,1\,1\,2}^{{}_{[k]}}$.  This shows the set of quadrupole basis functions for some given source point cannot be larger than the indicated set, but it does not show that the indicated sets are indeed independent.  Showing the actual independence of the indicated basis functions at a common source point follows from direct algebraic manipulation and involves first calculating them explicitly and then multiplying through by the common denominator, but this straightforward algebraic manipulation will not be done here.

\newcommand{\Aop}{\mathbb{A}}
\newcommand{\Bop}{\mathbb{B}}

   For $n < 4$, since there are only two independent basis multipole basis functions of order $n$ at each source location, it is useful to introduce a more compact notation for these cases so that the resulting potential can be written more efficiently in terms of the above independent basis multipole functions
\begin{equation}\label{E:ABseries}
 H^{(n)}(\vec{X}) \equiv \sum\limits_{k=1}^{N_k} \left(a^{(n)}_k{\Aop}^{(n)}_k + b^{(n)}_k{\Bop}^{(n)}_k\right)
\end{equation}
where, for $n = 1$
\begin{equation}\label{E:AB1def}
 {\Aop}^{(1)}_k \equiv {\Mop}_{1}^{{}_{[k]}}\ \ \text{and}\ \  {\Bop}^{(1)}_k \equiv {\Mop}_{2}^{{}_{[k]}}\,,
\end{equation}
for $n = 2$
\begin{equation}\label{E:AB2def}
 {\Aop}^{(2)}_k \equiv {\Mop}_{1\,1}^{{}_{[k]}}\ \ \text{and}\ \  {\Bop}^{(2)}_k \equiv {\Mop}_{1\,2}^{{}_{[k]}}\,,
\end{equation}
and for $n = 3$
\begin{equation}\label{E:AB3def}
 {\Aop}^{(3)}_k \equiv {\Mop}_{1\,1\,1}^{{}_{[k]}}\ \ \text{and}\ \  {\Bop}^{(3)}_k \equiv {\Mop}_{2\,2\,2}^{{}_{[k]}}\ .
\end{equation}
Also in (\ref{E:ABseries}) the coefficients $a^{(n)}_k$ and $b^{(n)}_k \in \mRR$.  Subsequently the superscript indicating the order is often omitted from $a^{(n)}_k$ and $b^{(n)}_k$. 

  Explicit expressions for ${\Aop}^{(n)}_k$ and ${\Bop}^{(1)}_k$ can be easily written:
\begin{subequations}\label{E:Avals}
\begin{align}
{\Aop}^{(1)}_k &= \frac{{\partial}\ }{\partial x} \ln\ (|{\vec{X}} - {\vec{X}}_k|^{-1}) = -\,\frac{x-x_k}{|{\vec{X}} - {\vec{X}}_k|^{2}}\ \,, &{\Bop}^{(1)}_k &= -\,\frac{x-x_k}{|{\vec{X}} - {\vec{X}}_k|^{2}}\label{E:Av1}\\
{\Aop}^{(2)}_k  &= \frac{(x-x_k)^2 - (y - y_k)^2}{|{\vec{X}} - {\vec{X}}_k|^{4}}\ \,,
&{\Bop}^{(2)}_k  &= 2\frac{(x-x_k)(y - y_k)}{|{\vec{X}} - {\vec{X}}_k|^{4}}\label{E:Av2}\\
{\Aop}^{(3)}_k  &= \frac{2(x-x_k)}{|{\vec{X}} - {\vec{X}}_k|^{6}}[3(y - y_k)^2 - (x-x_k)^2] \ \ \ \ {\text{and}}  
&{\Bop}^{(3)}_k  &= \frac{2(y-y_k)}{|{\vec{X}} - {\vec{X}}_k|^{6}}[3(x - x_k)^2 - (y-y_k)^2].\label{E:Av3}
\end{align}
\end{subequations}
 The higher order complex poles corresponding to the expressions in (\ref{E:Avals}) for $n = 2$ and $n = 3$ can easily be found, as in (\ref{E:PoleExpansion}), by multiplying the numerator and denominator of ${\mu}_k/(z - z_k)^n$ by $(z^{*} - z^{*}_k)^n$:
\begin{subequations}\label{E:Pvals}
\begin{align} 
{\text{Re}}\,\,\Big{\{}\frac{{\mu}_k}{(z - z_k)^2}\Big{\}} &= \frac{{\alpha}_k[(x - x_k)^2 - (y -y_k)^2]}{|z - z_k|^4} +  \frac{2{\beta}_k(x - x_k)(y -y_k)}{|z - z_k|^4}\\
{\text{Re}}\,\,\Big{\{}\frac{{\mu}_k}{(z - z_k)^6}\Big{\}} &= \frac{{\alpha}_k(x - x_k)[(x - x_k)^2 - 3(y -y_k)^2]}{|z - z_k|^6} + \frac{{\beta}_k(y -y_k)[3(x - x_k)^2 - (y -y_k)^2]}{|z - z_k|^6}
\end{align}
\end{subequations}
Since $|{\vec{X}} - {\vec{X}}_k| = |z - z_k|$, from (\ref{E:Avals}) it thus follows that (\ref{E:Pvals}) can be immediately be rewritten as
\begin{subequations}\label{E:PAmap}
\begin{align} 
{\text{Re}}\,\,\Big{\{}\frac{{\mu}_k}{(z - z_k)^2}\Big{\}} &= \ \  {\alpha}_k{\Aop}^{(2)}_k\  + {\beta}_k{\Bop}^{(2)}_k\label{E:PAmap1}\\
{\text{Re}}\,\,\Big{\{}\frac{{\mu}_k}{(z - z_k)^3}\Big{\}} &= - \frac{{\alpha}_k}2{\Aop}^{(3)}_k + \frac{{\beta}_k}2{\Bop}^{(3)}_k\label{E:PAmap12}
\end{align}
\end{subequations}
Clearly equations~(\ref{E:PAmap}) imply an invertible passive coefficient mapping, so that ${\text{Re}}\,\{h^{n)}(z)\} \iff H^{(n)}({\vec{X}})$ holds.  Thus Theorem~\ref{T:HigherPoleUniq} implies that multipole expansions of order three or less in \R{2} are unique and these results can be summarized in the following formally as:
\begin{theorem}\label{T:R2HigherPoleUniq}
In \R{2}, for finite $0 < n < 4$, if $H^{(n)}(\vec{X})$ has the form specified by (\ref{E:ABseries})
 where $N_k$ is finite, $0 < |{\vec{X}}_{k}| < 1$ and ${\vec{X}}_{k'} \neq {\vec{X}}_k$ for $k' \neq k$, then $H^{(n)}(\vec{X}) = 0$ for all $|\vec{X}| \geq 1$ if and only if ${a}_k = 0$ and $b_k = 0$ for all $k$. 
\end{theorem}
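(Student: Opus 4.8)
The plan is to treat the three admissible orders $n \in \{1,2,3\}$ uniformly by pulling each back to the complex higher-order pole uniqueness of Theorem~\ref{T:HigherPoleUniq} and then pushing the conclusion into $\mathbb{R}^2$ via Theorem~\ref{T:CtoRmap}. The reverse implication is trivial and I would dispose of it first: if $a_k = b_k = 0$ for every $k$, then every summand of (\ref{E:ABseries}) vanishes identically, so $H^{(n)}(\vec{X}) = 0$. Only the forward implication, that $H^{(n)}(\vec{X}) = 0$ on $|\vec{X}| \geq 1$ forces $a_k = b_k = 0$, requires work.

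First I would fix $n$ and take the complex basis functions to be $f_k(z) = (z - z_k)^{-n}$, $k = 1,\dots,N_k$. Because $0 < |z_k| < 1$, each $f_k$ is analytic and bounded on $|z| \geq 1$ and vanishes at infinity, so the standing hypotheses of Definition~\ref{S:Umaps}.1 hold, and Theorem~\ref{T:HigherPoleUniq} certifies that $\{f_k\}_{k=1}^{N_k}$ is unique in the complex setting. Setting $u_k \equiv \mathrm{Re}\{f_k\}$ and $v_k \equiv \mathrm{Im}\{f_k\}$, Theorem~\ref{T:CtoRmap} then yields that the combined real set $\{u_k\}_{k=1}^{N_k} \cup \{v_k\}_{k=1}^{N_k}$ is linearly independent over $\mathbb{R}$ on $|\vec{X}| \geq 1$.

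The decisive step is to recognize each $\mathbb{A}^{(n)}_k$ and $\mathbb{B}^{(n)}_k$ as an invertible, index-preserving linear combination of $u_k$ and $v_k$. For $n = 1$ this identification comes from (\ref{E:PoleExpansion}) together with the definitions (\ref{E:Mk1}) and (\ref{E:Mk2}); for $n = 2, 3$ it is exactly what the maps (\ref{E:PAmap}) record once the coefficients of the free real parameters $\alpha_k$ and $\beta_k$ are matched on both sides. In every case the pair $(\mathbb{A}^{(n)}_k, \mathbb{B}^{(n)}_k)$ is the image of $(u_k, v_k)$ under a nonsingular $2\times 2$ transformation acting within the single index $k$ (in fact a diagonal one, up to the signs and factors of $2$ visible in (\ref{E:PAmap})). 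Since this block-diagonal coefficient map is invertible, linear independence of $\{u_k\}\cup\{v_k\}$ carries over to linear independence of $\{\mathbb{A}^{(n)}_k\}_{k=1}^{N_k} \cup \{\mathbb{B}^{(n)}_k\}_{k=1}^{N_k}$.

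With that in hand the conclusion is immediate: $H^{(n)}(\vec{X})$ in (\ref{E:ABseries}) is a real linear combination of the $2N_k$ now-independent functions $\{\mathbb{A}^{(n)}_k, \mathbb{B}^{(n)}_k\}$, so $H^{(n)}(\vec{X}) = 0$ on $|\vec{X}| \geq 1$ forces every coefficient to vanish, giving $a_k = b_k = 0$ for all $k$. I expect the main point needing care to be the logical routing rather than any computation: it is tempting to argue that a vanishing real part $\mathrm{Re}\{h^{(n)}\} = 0$ already forces the full analytic function $h^{(n)}$ to vanish and then to cite Theorem~\ref{T:HigherPoleUniq} directly, but that inference is not automatic. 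Theorem~\ref{T:CtoRmap} is precisely the device that avoids it, by supplying the independence of the real parts (jointly with their conjugate imaginary parts) as bona fide $\mathbb{R}^2$ basis functions; so the argument must be channeled through it rather than through a premature claim that the vanishing of the real part kills the whole pole expansion.
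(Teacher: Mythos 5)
Your proposal is correct and follows essentially the same route as the paper: invoke Theorem~\ref{T:HigherPoleUniq} for the complex poles $(z-z_k)^{-n}$, transfer independence to $\mathbb{R}^2$ via Theorem~\ref{T:CtoRmap}, and use the invertible passive coefficient maps recorded in (\ref{E:PoleExpansion}), (\ref{E:Mk1})--(\ref{E:Mk2}) and (\ref{E:PAmap}) to identify $\{\mathbb{A}^{(n)}_k,\mathbb{B}^{(n)}_k\}$ with the real and imaginary parts of the poles. Your caution about not inferring $h^{(n)}=0$ from $\mathrm{Re}\{h^{(n)}\}=0$ directly is exactly the role Theorem~\ref{T:CtoRmap} plays in the paper as well.
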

\noindent
or less formally as
\begin{theorem}\label{T:HigherPoleUniq2}
A finite expansion of \R{2} points masses, point dipoles or point quadrupoles is unique.
\end{theorem}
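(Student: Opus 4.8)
The plan is to recognize this informal statement as a direct consolidation of the homogeneous uniqueness results already established, treating each of the three source types separately; the genuinely mixed expansions are a harder matter explicitly deferred to Section~\ref{S:Mixed}, so they play no role here. Accordingly I would dispose of the three families in turn, citing for each the relevant theorem already proved.

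I would treat point dipoles and point quadrupoles first, since these cost almost nothing. In the notation of (\ref{E:ABseries}), a finite point-dipole expansion in \R{2} is precisely $H^{(1)}(\vec{X})$ and a finite point-quadrupole expansion is precisely $H^{(2)}(\vec{X})$, corresponding to (\ref{E:AB1def}) and (\ref{E:AB2def}) respectively. Both indices fall inside the range $0 < n < 4$ covered by Theorem~\ref{T:R2HigherPoleUniq}, which asserts exactly that $H^{(n)}(\vec{X}) = 0$ for all $|\vec{X}| \geq 1$ forces $a_k = b_k = 0$ for all $k$. Hence these two cases require nothing beyond quoting that theorem.

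For point masses I would route through the complex setting. The \R{2} point-mass potential has the form (\ref{E:PMPsi}), namely $V(\vec{X}) = \sum_{k=1}^{N_k} m_k\,{\Psi}_k(\vec{X})$ with real $m_k$, and by (\ref{E:PsiK}) and (\ref{E:psi}) each ${\Psi}_k$ is the real part of the complex logarithmic basis function ${\psi}_k$. Theorem~\ref{T:CmpLogUniq} establishes that the complex logarithmic expansion $g(z) = \sum_k {\mu}_k\,{\psi}_k(z)$ is unique, and Theorem~\ref{T:CtoRmap}, applied with $f_k = {\psi}_k$, $u_k = {\Psi}_k$ and $v_k = {\text{Im}}\,\{{\psi}_k\}$, then guarantees that the combined real set $\{u_k\}_{k=1}^{N_k}\cup\{v_k\}_{k=1}^{N_k}$ is linearly independent in \R{2}. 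Since linear independence of a set descends to every subset, the subfamily $\{u_k\}_{k=1}^{N_k} = \{{\Psi}_k\}_{k=1}^{N_k}$ is itself linearly independent, which is exactly the assertion that $V(\vec{X}) = 0$ for all $|\vec{X}| \geq 1$ forces $m_k = 0$ for all $k$.

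There is no serious obstacle: the statement is a corollary of Theorems~\ref{T:CtoRmap}, \ref{T:CmpLogUniq} and \ref{T:R2HigherPoleUniq}. The one point deserving a moment's care is the point-mass reduction, where one must confirm that passing through Theorem~\ref{T:CtoRmap} legitimately isolates the real-part family $\{{\Psi}_k\}$ on its own rather than only the full conjugate-pair set, and that restricting to real coefficients (i.e.\ ${\beta}_k = 0$) discards nothing. Both are immediate once it is observed that independence descends to subsets and that the point-mass coefficients $m_k$ are real by the convention fixed in Section~\ref{S:intro}.
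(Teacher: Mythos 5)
Your proposal is correct and matches the paper's intent exactly: the paper offers Theorem~\ref{T:HigherPoleUniq2} only as an informal restatement of the results just established, namely Theorem~\ref{T:R2HigherPoleUniq} for the dipole and quadrupole cases and the logarithmic correspondence (Theorems~\ref{T:CtoRmap} and~\ref{T:CmpLogUniq}) for point masses. If anything you are slightly more careful than the paper, which glosses over the fact that the point-mass case ($n=0$) is not covered by Theorem~\ref{T:R2HigherPoleUniq} and requires the separate route through the complex logarithmic expansion together with the observation that linear independence descends to the real-part subfamily.
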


   Since there are more than two multipole basis functions for each source location of an \R{2} multipole of order four or higher, no simple correspondence exists between a given single higher order complex pole and the corresponding multipoles at the same location.  Given that all of the uniqueness results considered so far rest on such a correspondence, it is clear that most of the readily obtainable results in \R{2} have been found and multipoles of order $n > 3$ will thus not be considered.  The next issue to be addressed is extending these \R{2} results to \R{3} or even \R{N}, for $N > 3$.

\section{Uniqueness Results for $\mathbb{R}^3$}\label{S:Real3}

  This section addresses the question of point mass uniqueness by showing that uniqueness of point mass distributions in \R{2} has direct consequences in \R{3}.  Towards that end the following simple lemma will prove to be useful:
\begin{lemma}\label{T:RNLemma}
In \R{N}, for $N > 1$, an array of $N_k$ distinct points forms at most $2\,N_k \times (N_k - 1)$ unique directions when all possible lines containing two or more array points are considered; moreover, it is possible to rotate coordinates so that all of the points are distinct when projected into the $N-1$ dimensional hyperplane orthogonal to some preferred coordinate axis.
\end{lemma}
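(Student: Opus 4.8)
The plan is to prove the two assertions in turn: first the (loose) bound on the number of directions, and then to use the \emph{finiteness} of that set of directions to manufacture a good projection axis.

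For the counting claim I would identify each direction with the line that carries it. Because the $N_k$ points are distinct, every line through two or more of them is spanned by some unordered pair of the points, and there are exactly $\binom{N_k}{2}=N_k(N_k-1)/2$ such pairs. Distinct pairs lying on parallel lines collapse to a single direction, so the number of distinct (unsigned) directions is at most $\binom{N_k}{2}$; if instead a direction is recorded as an oriented unit vector, each line contributes two, giving at most $N_k(N_k-1)$. Either way the total sits comfortably below the stated bound $2N_k(N_k-1)$, and the only feature of the count that matters downstream is that this set of pairwise directions is finite.

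For the projection claim the key observation is an orthogonality criterion: two distinct points $\vec{X}_k$ and $\vec{X}_{k'}$ have the same orthogonal projection onto a hyperplane $\Pi$ if and only if $\vec{X}_k-\vec{X}_{k'}$ is orthogonal to $\Pi$, i.e.\ parallel to the unit normal $\hat{e}$ of $\Pi$. Hence projection onto $\Pi$ fails to separate the array exactly when $\hat{e}$ agrees with one of the finitely many pairwise directions found above. Collecting those directions (and their negatives) into a finite ``bad set'' $B\subset S^{N-1}$, I would note that since $N>1$ the sphere $S^{N-1}$ is a continuum, so $S^{N-1}\setminus B$ is nonempty; I pick any $\hat{e}$ there. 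Choosing a rotation $R$ that carries $\hat{e}$ onto a fixed coordinate axis then realizes $\hat{e}$ as the ``preferred coordinate axis,'' and in the rotated frame the orthogonal projection onto the hyperplane normal to that axis is injective on the $N_k$ points, as required.

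The argument is elementary and I do not anticipate a genuine obstacle; the two places that call for care are (i) the biconditional above, which is a one-line computation from $\langle \vec{X}_k-\vec{X}_{k'},\hat{e}\rangle=0$, and (ii) the appeal to $N>1$, which is precisely what guarantees that a finite set of excluded directions cannot exhaust the available axes (for $N=1$ there is a single direction and nothing to project onto, so the hypothesis $N>1$ is essential). If an explicit rather than merely existential axis were wanted, one could instead observe that $B$ has measure zero on $S^{N-1}$ and select $\hat{e}$ generically, but the finite-set-versus-continuum dichotomy already closes the argument.
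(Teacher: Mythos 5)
Your proposal is correct and follows essentially the same route as the paper: bound the number of pairwise directions by counting pairs of points, then use the finiteness of that set against the continuum of available axis directions to choose a projection axis avoiding them all. Your count via unordered pairs is in fact slightly tighter than the paper's ordered-pair count of $2\,N_k \times (N_k-1)$, and your explicit orthogonality criterion for when two points share a projection is a welcome clarification, but neither changes the substance of the argument.
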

\begin{proof}
Since two points determine a straight line, the maximin number of independent lines occurs when no three or more points are collinear; i.e., chose any of the $N_k$ points and a second distinct point.  Factoring in the fact that any line determines two possible directions yields $2\,N_k \times (N_k - 1)$ for a maximum number of possible directions. (Besides the fact that many points may be collinear, many of the lines formed may be parallel, so the actual number might be much smaller than $2\,N_k \times (N_k - 1)$.)  This shows the first part of the lemma.

  Next consider the second part of the lemma.
  Given a choice of coordinate origin, there is a continuous choice of preferred axis directions.  Specifically, for a preferred coordinate axis ($N$'th axis) choose a direction that does not coincide with any of the finite possible directions along which two or more points line up.  Since no two points line up, their projection into the orthogonal hyperplane of the preferred direction is distinct and the lemma follows.  
\end{proof}
This lemma leads to the following lemma:
\begin{lemma}\label{T:PMLemma}
If a point mass uniqueness counter-example exists in \R{3}, then one exists in \R{2}. 
\end{lemma}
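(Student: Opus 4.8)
The plan is to take a hypothetical $\mathbb{R}^3$ counter-example and collapse it onto an $\mathbb{R}^2$ one by integrating out the third coordinate, exploiting the classical fact that a uniform line of $\mathbb{R}^3$ point sources produces a logarithmic ($\mathbb{R}^2$) potential. So suppose distinct points $\vec X_k$ with $0 < |\vec X_k| < 1$ and masses $m_k$, not all zero, give a potential $V$ of the form (\ref{E:PM3}) with $V(\vec X) = 0$ for all $|\vec X| \ge 1$. The first thing I would record is that necessarily $\sum_{k} m_k = 0$: since $V$ is harmonic and vanishes on the whole connected exterior region $|\vec X| > \max_k |\vec X_k|$, its monopole moment, the coefficient of $1/|\vec X|$ in its expansion at infinity, must be zero.

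Next I would apply Lemma~\ref{T:RNLemma}. Writing $\vec X = (x,y,z)$, rotate coordinates so that the projections $(x_k,y_k)$ of the sources onto the plane $z = 0$ are all distinct; since only finitely many axis directions are excluded, one may also require that no source lies on the chosen $z$-axis. The projected points then satisfy $0 < \sqrt{x_k^2 + y_k^2} < 1$, the upper bound because $\sqrt{x_k^2 + y_k^2} \le |\vec X_k| < 1$, so they constitute an admissible interior configuration for the $\mathbb{R}^2$ problem.

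The heart of the argument is the function
\[
 W(x,y) \eq \int_{-\infty}^{\infty} V(x,y,z)\,dz .
\]
On the one hand, for every field point with $\sqrt{x^2 + y^2} \ge 1$ the entire vertical line lies in the exterior region, because $|\vec X| \ge \sqrt{x^2 + y^2} \ge 1$, so $V$ vanishes identically along it and $W(x,y) = 0$. On the other hand, I would evaluate $W$ directly through the truncated integral $\int_{-L}^{L}$, using $\int dz/\sqrt{\rho^2 + z^2} = \ln\!\bigl(z + \sqrt{\rho^2 + z^2}\bigr) + \mathrm{const}$; writing $\rho_k = \sqrt{(x-x_k)^2 + (y-y_k)^2}$, each source contributes $2\ln(2L) - 2\ln\rho_k + o(1)$ as $L \to \infty$. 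The divergent pieces appear only through the factor $\sum_k m_k = 0$, so the limit exists and
\[
 W(x,y) = -2\sum_{k} m_k \ln\rho_k = 2\sum_{k} m_k \ln\!\bigl(\rho_k^{-1}\bigr),
\]
which is exactly twice the $\mathbb{R}^2$ point-mass potential (\ref{E:PM2}) of the projected masses. Equating the two evaluations gives $\sum_k m_k \ln(\rho_k^{-1}) = 0$ for all $\sqrt{x^2 + y^2} \ge 1$: an $\mathbb{R}^2$ counter-example supported on the distinct interior points $(x_k,y_k)$ with the same, not all zero, masses $m_k$.

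The step I expect to be the main obstacle is the analytic bookkeeping in the direct evaluation of $W$: the individual line integrals each diverge, so summation and integration cannot be interchanged naively, and convergence of $W$ rests precisely on the monopole condition $\sum_k m_k = 0$. I would therefore keep the cutoff $L$ finite until the cancellation of the $\ln(2L)$ terms has been made explicit and only then pass to the limit, which also justifies treating $W$ as a genuine (absolutely convergent) integral since the combined integrand decays like $z^{-2}$. The remaining requirements, distinctness of the projected points and $0 < \sqrt{x_k^2 + y_k^2} < 1$, are routine and are furnished by Lemma~\ref{T:RNLemma} together with the freedom to perturb the preferred axis.
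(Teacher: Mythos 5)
Your proposal is correct and follows essentially the same route as the paper: both invoke Lemma~\ref{T:RNLemma} to make the projected source points distinct, then integrate the $\mathbb{R}^3$ potential along the preferred axis, using the monopole condition $\sum_k m_k = 0$ to control the logarithmic divergence (the paper subtracts a compensating $m_k/|\vec{X}|$ term per source, you keep the cutoff $L$ and cancel the $\ln(2L)$ terms, which amounts to the same regularization) and arrive at the $\mathbb{R}^2$ counter-example $\sum_k m_k \Psi_k = 0$.
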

\begin{proof}
If a counter example exists in \R{3} then from (\ref{E:PM3}) it can be assumed, without loss of generality, that an $N_k$ exists with $m_k \neq 0$ for all $N_k \geq k \geq 1$ such that 
\begin{equation}\label{E:PMU}
V(\vec{X}) = \sum\limits_{k=1}^{N_k}\ \frac{m_k\ \ }{|{\vec{X}} - {\vec{X}}_k|} = 0
\end{equation}
for all $|\vec{X}| \geq 1$ (where ${\vec{X}}_{k'} \neq {\vec{X}}_{k}$ for all $k' \neq k$).  As before, without loss of generality, it is assumed that $\sum_{k=1}^{N_k} m_k = 0$ and ${\vec{X}}_{k}\neq 0$ for all $k$.  Thus, for later convenience a set of point masses $\sum_{k=1}^{N_k} m_k = 0$ at the origin can be explicitly added to (\ref{E:PMU}).

 Let ${\vec{X}} \equiv (x,\,y,\,z)^T$ and ${\vec{X}}_k \equiv (x_k,\,y_k,\,z_k)^T$ where there is no danger here of confusing $z$ and $z_k$ with the complex variables introduced earlier.  
Assume that the coordinates have been chosen in accord with Lemma~\ref{T:RNLemma} where the preferred direction has been taken to be along the $z-$axis so that in the $x-y$ plane the $N_k$ points are all distinct, so that $(x_k,\,y_k,\,0)^T \neq (x_{k'},\,y_{k'},\,0)^T$ for all $k' \neq k$.  Clearly a linear superpositions of potentials along the $z-$axis also obey the criteria of (\ref{E:PMU}) so that 
\begin{equation}\label{E:Zint}
\int\limits_{z=-L}^{L} \sum\limits_{k=1}^{N_k}\ \left(\frac{m_k\ \ }{|{\vec{X}} - {\vec{X}}_k|} - \frac{\ m_k\ \ }{|{\vec{X}}|\ }\right)\,d\,z\ =\ 0
\end{equation}
Observe that 
\begin{equation}\notag
\lim_{\,\ \ L \to \infty}\, \int\limits_{z=-L}^{\ L}\frac{\ \ d\,z}{\sqrt{(x - x_k)^2 + (y - y_k)^2 + (z - z_k)^2}}\ =\ \lim_{\,\ \ L \to \infty}\, 2\!\!\int\limits_{z=0}^{\ L}\frac{\ \ d\,z}{\sqrt{(x - x_k)^2 + (y - y_k)^2 + z^2}}
\end{equation}
and that since 
\begin{equation}\notag
\int\limits_{z=0}^{\ L}\frac{\ \ d\,z}{\sqrt{a^2 + z^2}}\ d\,z = \ln\,\left(L + \sqrt{a^2 + L^2}\,\right) - \ln\, a
\end{equation}
it follows that
\begin{equation}\notag
\int\limits_{z=-L}^{\ L}\!\!\left(\frac{1}{\sqrt{a^2 + z^2}} - \frac{1}{\sqrt{b^2 + z^2}}\right)\,d\,z\  =\ 2\,\ln\left(\frac{1 + \sqrt{(a/L)^2 + 1}}{1 + \sqrt{(b/L)^2 + 1}}\right) -  2\,\ln\,\left(\frac{a}{b}\right),
\end{equation}
where $a \equiv \sqrt{(x - x_k)^2 + (y - y_k)^2}\,$ and $b \equiv \sqrt{x^2 + y^2}$.  It thus follows that
\begin{equation}\label{E:Zinf}
\int\limits_{z=-\infty}^{\ \ \infty}\!\left(\frac1{|{\vec{X}} - {\vec{X}}_k|} - \frac{1}{|{\vec{X}}|\ }\right)\,d\,z\ =  2{\Psi}_k(\vec{X})
\end{equation}
and thus that 
\begin{equation}\label{E:R2counter}
\sum\limits_{k=1}^{N_k}\, m_k{\Psi}_k(\vec{X}) = 0\ \ \text{with}\ \ m_k \neq 0  \ \text{for all}\ k.
\end{equation}
\end{proof}
\noindent
A comparison of (\ref{E:R2counter}) and (\ref{E:PMPsi}) allows one to one to conclude from Lemma~\ref{T:PMLemma} and Theorem~\ref{T:HigherPoleUniq2} that
\begin{theorem}\label{T:HigherPoleUniq3}
A finite expansion of \R{3} points masses is unique, where the usual conditions are assumed to apply.
\end{theorem}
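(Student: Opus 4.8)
The plan is to argue by contradiction, leaning entirely on the two lemmas and the planar result already in hand, so that essentially all of the real work has been pushed into Lemma~\ref{T:PMLemma}. Suppose the theorem fails, i.e.\ that the point-mass expansion (\ref{E:PM3}) is \emph{not} unique in $\mathbb{R}^3$. Then there is a finite configuration of distinct source points $\vec{X}_k$ with strengths $m_k$, not all zero, for which $V(\vec{X}) = 0$ on the whole exterior $|\vec{X}| \geq 1$. Dropping any vanishing terms and reindexing, I may assume $m_k \neq 0$ for every $k$; this is precisely the hypothesis under which Lemma~\ref{T:PMLemma} operates.

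First I would invoke Lemma~\ref{T:PMLemma} to transport this hypothetical counter-example down one dimension. The lemma manufactures, from the $\mathbb{R}^3$ configuration, an $\mathbb{R}^2$ configuration satisfying (\ref{E:R2counter}): a sum $\sum_k m_k \Psi_k(\vec{X}) = 0$ with every $m_k \neq 0$. Comparing (\ref{E:R2counter}) against the $\mathbb{R}^2$ point-mass form (\ref{E:PMPsi}), this is exactly a nontrivial $\mathbb{R}^2$ point-mass expansion vanishing identically on the exterior, that is, an $\mathbb{R}^2$ counter-example. I would then close the loop with the planar uniqueness already established: by Theorem~\ref{T:HigherPoleUniq2} (equivalently, by the logarithmic uniqueness of Theorem~\ref{T:CmpLogUniq} carried into $\mathbb{R}^2$ through the standard-completion map of Theorem~\ref{T:CtoRmap}) no such nontrivial planar expansion can vanish on $|\vec{X}| \geq 1$, forcing $m_k = 0$ for all $k$. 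This contradicts $m_k \neq 0$, so the assumed $\mathbb{R}^3$ counter-example cannot exist and the expansion is unique.

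Given the lemmas, this deduction is essentially a single line, and the genuine difficulty has already been absorbed into Lemma~\ref{T:PMLemma}; it is worth flagging where, since that is the step I would expect to be the real obstacle if one were building the argument from scratch. The substantive move is the dimension reduction (\ref{E:Zint})--(\ref{E:Zinf}), in which integrating each $\mathbb{R}^3$ kernel $1/|\vec{X}-\vec{X}_k|$ along the entire $z$-line collapses it, up to a constant factor of two, into the planar logarithmic potential $\Psi_k$. Two features make this delicate. The $z$-integral of a single kernel diverges logarithmically, so convergence is recovered only after subtracting the compensating masses placed at the origin, which in turn relies on the normalization $\sum_k m_k = 0$ inherited from the vanishing of the leading $m_0/|\vec{X}|$ monopole term; and the reduction is useful only if the projected planar points stay distinct, for otherwise two terms would merge and one could no longer separate the individual coefficients. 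This second requirement is exactly what the rotation in Lemma~\ref{T:RNLemma} secures, by aligning the $z$-axis away from the finitely many directions that join pairs of source points.

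By contrast, a direct assault on the $\mathbb{R}^3$ problem --- expanding each kernel in solid spherical harmonics about the origin and demanding that every coefficient of $r^{-(l+1)}Y_{lm}$ vanish --- leads to an infinite homogeneous linear system in the $m_k$ whose solvability hinges on the full rank of a generalized Vandermonde-type matrix assembled from spherical-harmonic values at the distinct source directions. That is precisely the kind of obstruction the paper's indirect, dimension-lowering strategy is designed to sidestep, which is why I would commit to the contradiction-plus-lemmas route above rather than attempt the frontal spherical-harmonic calculation.
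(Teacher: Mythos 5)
Your proposal is correct and follows exactly the paper's route: the paper derives Theorem~\ref{T:HigherPoleUniq3} in one line by combining Lemma~\ref{T:PMLemma} (which converts a hypothetical $\mathbb{R}^3$ counter-example into an $\mathbb{R}^2$ one via the $z$-line integral reduction (\ref{E:Zint})--(\ref{E:Zinf})) with the planar uniqueness of Theorem~\ref{T:HigherPoleUniq2}, comparing (\ref{E:R2counter}) with (\ref{E:PMPsi}) just as you do. Your added remarks on where the real difficulty lives (the convergence of the subtracted integral and the distinctness of the projected points secured by Lemma~\ref{T:RNLemma}) are accurate commentary rather than a deviation in method.
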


  Next consider the question of uniqueness for the \R{3} dipole expansion given by (\ref{E:DP3}):
\begin{equation}\label{E:DPR3}
W(\vec{X}) \equiv \sum\limits_{k=1}^{N_k}\ {\vec{D}}_k\mathbf{\cdot}\,{\mathbf{\nabla}}\,({|{\vec{X}} - {\vec{X}}_k|^{-1}})\ .
\end{equation}
It is possible to state and prove a dipole analog of Lemma~\ref{T:PMLemma}:
\begin{lemma}\label{T:DPLemma}
If a point dipole uniqueness counter-example exists in \R{3} then one exists in \R{2}. 
\end{lemma}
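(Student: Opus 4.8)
The plan is to follow the blueprint of Lemma~\ref{T:PMLemma} almost verbatim, replacing the monopole kernel by the dipole kernel and then tracking exactly what survives integration along a preferred axis. Toward constructing the desired \R{2} counter-example, I would assume a genuine \R{3} dipole counter-example: after discarding any vanishing terms, $W(\vec X)=\sum_{k=1}^{N_k}\vec D_k\mathbf{\cdot}\,\mathbf{\nabla}(|\vec X-\vec X_k|^{-1})=0$ for all $|\vec X|\geq1$, with every $\vec D_k\neq0$, the $\vec X_k$ distinct, and $0<|\vec X_k|<1$. Writing $\vec D_k=(D_{k,x},D_{k,y},D_{k,z})^T$ and $\vec X_k\equiv(x_k,y_k,z_k)^T$, I would integrate $W$ over the line $z\in(-\infty,\infty)$ at fixed $(x,y)$ with $x^2+y^2\geq1$. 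Since such a line lies wholly in the exterior region, where $W\equiv0$, the integral $\int_{-\infty}^{\infty}W\,dz$ vanishes identically.

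The first key step is to evaluate that integral term by term (each term being absolutely convergent since the dipole kernel decays like $|\vec X|^{-2}$). The longitudinal piece is a total $z$-derivative, $\int_{-\infty}^{\infty}\partial_z(|\vec X-\vec X_k|^{-1})\,dz=0$, so each $D_{k,z}$ contributes nothing. For the transverse pieces, interchanging $\partial_x$ (resp.\ $\partial_y$) with the $z$-integral and using the same elementary integral that produced (\ref{E:Zinf})---now convergent without the $1/|\vec X|$ subtraction---gives $\int_{-\infty}^{\infty}\partial_x(|\vec X-\vec X_k|^{-1})\,dz=2\,{\Mop}_1^{{}_{[k]}}$ and $\int_{-\infty}^{\infty}\partial_y(|\vec X-\vec X_k|^{-1})\,dz=2\,{\Mop}_2^{{}_{[k]}}$, with ${\Mop}_1^{{}_{[k]}}$ and ${\Mop}_2^{{}_{[k]}}$ the planar dipole basis functions (\ref{E:Mk1}) and (\ref{E:Mk2}). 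Thus $\int_{-\infty}^{\infty}W\,dz=2\sum_{k=1}^{N_k}\bigl(D_{k,x}{\Mop}_1^{{}_{[k]}}+D_{k,y}{\Mop}_2^{{}_{[k]}}\bigr)=0$, which is precisely an \R{2} dipole expansion with planar dipoles $\vec d_k=(D_{k,x},D_{k,y})^T$ seated at the projected points $(x_k,y_k)^T$.

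The main obstacle is exactly that this $z$-integration annihilates the longitudinal component $D_{k,z}$, so the reduction would be vacuous if some $\vec d_k$ vanished (in particular if a dipole happened to align with the chosen axis). This is the one genuinely new ingredient over the monopole case, and I would resolve it by strengthening the coordinate choice of Lemma~\ref{T:RNLemma}. In addition to the finitely many inter-point line directions that must be avoided to keep the projected points distinct, I would also forbid the finitely many directions $\pm\vec D_k/|\vec D_k|$ and $\pm\vec X_k/|\vec X_k|$. The forbidden set is still a finite subset of the sphere of admissible axis directions, so a valid preferred axis survives; choosing it guarantees simultaneously that the projected points are distinct, that each $\vec d_k\neq0$, and that no projected point sits at the origin (so $0<|(x_k,y_k)^T|<1$, using $x_k^2+y_k^2\leq|\vec X_k|^2<1$).

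With such an axis fixed, the resulting identity exhibits a nonvanishing \R{2} dipole combination that is zero throughout $x^2+y^2\geq1$, with distinct admissible source points---exactly the \R{2} dipole counter-example asserted by the lemma. The contradiction with \R{2} dipole uniqueness (Theorem~\ref{T:HigherPoleUniq2}) would then be drawn afterwards, just as it was following Lemma~\ref{T:PMLemma}, to conclude \R{3} dipole uniqueness.
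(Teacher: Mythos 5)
Your proposal is correct and follows essentially the same route as the paper: integrate $W$ along a preferred $z$-axis chosen (via an augmented Lemma~\ref{T:RNLemma}) to avoid the finitely many inter-point directions \emph{and} the dipole directions $\vec D_k$, so that the longitudinal components are annihilated while the transverse components survive as a nonvanishing \R{2} dipole counter-example. The only difference is cosmetic: you drop the $1/|\vec X|$ subtraction since the dipole integrands already converge absolutely, whereas the paper retains it for parallelism with Lemma~\ref{T:PMLemma}.
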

\begin{proof}
Suppose, as in Lemma~\ref{T:PMLemma}, to the contrary that an \R{3} counterexample exists and thus that for some $N_k$ distinct ${\vec{X}}_k$, that for some $W$ given by the right hand side of (\ref{E:DPR3}) $W \equiv 0$ for some $|{\vec{D}}_k| \neq 0$, for all $k$.  From Lemma \ref{T:RNLemma} there are only a finite number of directions parallel to the lines determined by two or more of the ${\vec{X}}_k$'s. To this finite collection of directions add the vectors $\{{\vec{D}}_k\}_{k=1}^{N_k}$ and then select a preferred $z$ coordinate direction that is different from all of these directions and denote this direction by $\hat{k}$.  Since $|\hat{k}\cdot{\vec{D}}_k| < |{\vec{D}}_k|$ for all $k$ and, by construction, not only does $(x_k,\,y_k,\,0)^T \neq (x_{k'},\,y_{k'},\,0)^T$ for all $k' \neq k$ hold, as in the proof of Lemma~\ref{T:PMLemma}, but the projection of ${\vec{D}}_k$ on the $x-y$ plane is nonzero.  Let ${\vec{D}}^{\{2\}}_k$ denote this projection.  Further, as before, the case $\sum_{k=1}^{N_k} {\vec{D}}_k \neq 0$ presents no real difficulties and $(x_k,\,y_k,\,0)^T \neq 0$ can also be assumed.  Then taking the integral along the $z-$axis as in (\ref{E:Zinf}) gives [after adding an analogous term at the origin to the one added to (\ref{E:Zint})]
\begin{subequations}\label{E:DZinf}
\begin{align}
\int\limits_{z=-\infty}^{\ \ \infty}\!\frac{\partial\ }{\partial\,x}\left(\frac1{|{\vec{X}} - {\vec{X}}_k|} - \frac{1 }{|{\vec{X}}|\ }\right)\,d\,z\ &=  2\frac{\partial\ }{\partial\,x}{\Psi}_k(\vec{X})\label{E:DZinfX}\\
\int\limits_{z=-\infty}^{\ \ \infty}\!\frac{\partial\ }{\partial\,y}\left(\frac1{|{\vec{X}} - {\vec{X}}_k|} - \frac{1 }{|{\vec{X}}|\ }\right)\,d\,z\ &=  2\frac{\partial\ }{\partial\,y}{\Psi}_k(\vec{X})\label{E:DZinfY}\\
\int\limits_{z=-\infty}^{\ \ \infty}\!\frac{\partial\ }{\partial\,z}\left(\frac1{|{\vec{X}} - {\vec{X}}_k|} - \frac{1 }{|{\vec{X}}|\ }\right)\,d\,z\ &=  0
\end{align}
\end{subequations}
Using (\ref{E:DZinf}) in (\ref{E:DPR3}) allows the \R{3} counterexample to be restated as
\begin{equation}\label{E:Wint}
\int\limits_{z=-\infty}^{\ \ \ \infty}\negthickspace\negthickspace W(\vec{X}) \ d\,z\ =\  2\sum\limits_{k=1}^{N_k}\, {\vec{D}}^{\{2\}}_k{\mathbf{\cdot}}{\nabla}{\Psi}_k(x,\,y)\ =\ 0
\end{equation}
where $|{\vec{D}}^{\{2\}}_k|\, \neq 0$ for all $k$.
\end{proof}
Since Theorem~\ref{T:HigherPoleUniq2} shows that a counterexample of the form specified by (\ref{E:Wint}) cannot exist, the following theorem is an immediate consequence of Lemma~\ref{T:DPLemma}:
\begin{theorem}\label{T:dipoleUniq3}
A finite expansion of \R{3} point dipoles is unique.
\end{theorem}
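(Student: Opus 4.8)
The plan is to obtain Theorem~\ref{T:dipoleUniq3} as an immediate corollary of the dimension-reduction lemma, mirroring exactly the derivation of the point-mass result in Theorem~\ref{T:HigherPoleUniq3}. I would argue by contradiction: suppose that \R{3} point-dipole expansions are \emph{not} unique, so that some nontrivial collection $\{{\vec{D}}_k\}_{k=1}^{N_k}$ with ${\vec{D}}_k \neq 0$ for all $k$ produces a field $W(\vec{X})$ of the form (\ref{E:DPR3}) that vanishes identically on the exterior region $|\vec{X}| \geq 1$. The goal is then to push this hypothetical \R{3} counterexample down into \R{2}, where its existence has already been ruled out.

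The key steps, in order, are as follows. First, invoke Lemma~\ref{T:DPLemma}, which converts the assumed \R{3} counterexample into an \R{2} counterexample by integrating $W$ along the preferred axis; the upshot is identity (\ref{E:Wint}), namely $\sum_{k=1}^{N_k} {\vec{D}}^{\{2\}}_k \cdot \nabla {\Psi}_k = 0$ with every planar projected moment ${\vec{D}}^{\{2\}}_k$ nonzero. Second, recognize (\ref{E:Wint}) as a nontrivial instance of the \R{2} first-order multipole (dipole) series (\ref{E:ABseries}) with $n = 1$ and real coefficients $a_k,\,b_k$ supplied by the two components of ${\vec{D}}^{\{2\}}_k$. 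Third, apply the \R{2} uniqueness result, Theorem~\ref{T:HigherPoleUniq2} (formally Theorem~\ref{T:R2HigherPoleUniq}), which asserts that such a planar dipole sum can vanish on $|\vec{X}| \geq 1$ only if all its coefficients are zero. This contradicts the nonvanishing of the ${\vec{D}}^{\{2\}}_k$, so no \R{3} counterexample can exist and the theorem follows.

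The only genuine obstacle, and the reason the statement is not utterly trivial, lies entirely inside Lemma~\ref{T:DPLemma} (and the geometric Lemma~\ref{T:RNLemma} on which it rests): one must guarantee that the planar projections ${\vec{D}}^{\{2\}}_k$ do not collapse to zero under the dimension reduction. This is secured by choosing the preferred axis $\hat{k}$ to avoid not only the finitely many directions determined by lines through pairs of source points---so that the projected positions $(x_k,\,y_k)^T$ stay pairwise distinct---but also the finitely many dipole directions $\{{\vec{D}}_k\}$, which forces $|\hat{k}\cdot{\vec{D}}_k| < |{\vec{D}}_k|$ and hence ${\vec{D}}^{\{2\}}_k \neq 0$ for every $k$. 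Since that delicate axis selection has already been carried out in the preceding lemma, the proof of Theorem~\ref{T:dipoleUniq3} itself reduces to the short contradiction sketched above, precisely as in the point-mass case.
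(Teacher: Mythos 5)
Your proposal is correct and follows essentially the same route as the paper: Theorem~\ref{T:dipoleUniq3} is obtained there as an immediate consequence of Lemma~\ref{T:DPLemma} (which produces the planar counterexample (\ref{E:Wint}) with all projected moments ${\vec{D}}^{\{2\}}_k$ nonzero) together with the $\mathbb{R}^2$ uniqueness result of Theorem~\ref{T:HigherPoleUniq2}. Your observation that the only delicate point is the choice of preferred axis avoiding both the pairwise source directions and the dipole directions matches the paper's argument exactly.
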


 Clearly, one would expect the results of Lemma~\ref{T:PMLemma} and thus Theorems~\ref{T:HigherPoleUniq3} and \ref{T:dipoleUniq3} to generalize to \R{N} for $N > 3$, with only the technical difficulty of explicitly obtaining the $N$ dimensional integral analogs of (\ref{E:Zinf}) and (\ref{E:DZinf}) to stand in the way; however, two other remaining issues of far greater practical importance are less clear:  (1) Proving quadrupole and other multipole uniqueness results in \R{3}.  (2) Proving expansions of mixed types of point sources are unique.
It is unclear how the first issue should be approached, but an approach to the second issue can be based on the theorems of Section~\ref{S:Umaps}.   As before, results in the complex plane will be the starting point.

\section{Uniqueness for Point Sources of Mixed Type}\label{S:Mixed}

This section deals with uniqueness results for expansions of mixed type.  Only the complex setting will be considered here even though results for mixed types in \C\ can be directly extended to \R{2} and \R{3} mixed type results.  There is a pragmatic reason for considering only the complex setting: Uniqueness results for mixed types of expansions are very difficult to prove and, in the end, given that there is only a limited amount of success here in studying the complex case, consideration of the real case is irrelevant.

  Thus two particular kinds of mixed type analytic expansions will be of primary interest here: (1) Expansions consisting of both simple pole terms and second order pole terms. (2) Expansions consisting of simple poles and logarithmic basis functions.   
 
First, observe that as far as uniqueness results for expansions of mixed type are concerned, it only necessary to consider the general case where there are (possibly) a like number of either simple poles and logarithmic basis functions that are located at the same point or, in a like manner, to consider only the case consisting of simple poles and second order poles that are located at the same point.  Thus, without loss of generality, for the simple pole and second order pole case consider the following expression for poles of mixed type:
\begin{equation}\label{E:MixedC}
h^{(1,\,2)}(z) \equiv \sum\limits_{k=1}^{N_k} \frac{{\mu}_k}{(z - z_k)}\ + \sum\limits_{k=1}^{N_k} \frac{{\nu}_k}{(z - z_k)^{2}}
\end{equation}
For $m = 2$, (\ref{E:BinCoef}) gives
\begin{equation}
\frac{1}{(z - z_k)^{2}} = \frac1{z^2} \sum\limits_{n=0}^{\infty} (n + 1) \frac{z_k^n}{z^n}\ ,
\end{equation}
which can be combined with (\ref{E:geoseries}) to yield
\begin{equation}\label{E:MixedC1}
h^{(1,\,2)}(z) = \frac1z\sum\limits_{k=1}^{N_k} {{\mu}_k}\sum\limits_{n=0}^{\infty} \frac{z_k^n}{z^{n}}\ + \frac1{z^2}\sum\limits_{k=1}^{N_k}{{\nu}_k} \sum\limits_{n=0}^{\infty} (n + 1)\frac{z_k^n}{z^n}\ .
\end{equation}
The RHS of (\ref{E:MixedC1}) can be rewritten as
\begin{equation}\label{E:MixedC2}
z{\cdot}h^{(1,\,2)}(z) = \sum\limits_{k=1}^{N_k} {{\mu}_k} + \sum\limits_{k=1}^{N_k} {{\mu}_k}\sum\limits_{n=1}^{\infty} \frac{z_k^n}{z^{n}}\ + \sum\limits_{k=1}^{N_k}{{\nu}_k} \sum\limits_{n=1}^{\infty} \frac{nz_k^{n-1}}{z^n}\ .
\end{equation}
Setting the successive powers of $z^{-n}$ individually to zero on the RHS of (\ref{E:MixedC2}) (and simply ignoring the condition $\sum_{k=1}^{N_k} {{\mu}_k} = 0$) gives the following set of equations that must hold when $z{\cdot}h^{(1,\,2)}(z) = 0$:
\begin{equation}\label{E:Gnew}
    \begin{pmatrix}
  z_1    & z_2    & z_3    & \cdots & z_{N_k}   &  1    & 1     & 1     & \cdots & 1\\
\\
  z_1^2  & z_2^2  & z_3^2  & \cdots & z_{N_k}^2 & 2z_1  & 2z_2  & 2z_3  & \cdots & 2z_{N_k} \\
\\
  z_1^3  & z_2^3  & z_3^3  & \cdots & z_{N_k}^3 & 3z_1^2& 3z_2^2& 3z_3^2& \cdots & 3z_{N_k}^2\\
     \vdots & \vdots & \vdots &        & \vdots & \vdots& \vdots& \vdots&        & \vdots\\
  z_1^{{}^{N_k}}  & z_2^{{}^{N_k}}  & z_3^{{}^{N_k}}  & \cdots &z_{N_k}^{{}^{N_k}} & \mSmall{N_k}\!z_1^{{}^{N_k-1}} & \mSmall{N_k}\!z_2^{{}^{N_k-1}}  &\mSmall{N_k}\!z_3^{{}^{N_k-1}}  &  \cdots & \mSmall{N_k}\!z_{N_k}^{{}^{N_k-1}} \\
 \\
  z_1^{{}^{N_k+1}} \mspace{-14mu} & \mspace{-5mu}z_2^{{}^{N_k+1}}  \mspace{-14mu}  & \mspace{-5mu}z_3^{{}^{N_k+1}}  \mspace{-12mu}  &  \mspace{-18mu}\cdots  \mspace{-10mu}  &  \mspace{-14mu} z_{N_k}^{{}^{N_k+1}}\mspace{-5mu} & \mspace{-11mu}\mSmall{(N_k\!+\!1)}\!z_1^{{}^{N_k}}\mspace{-5mu} &\mspace{-11mu} \mSmall{(N_k\!+\!1)}\!\!z_2^{{}^{N_k}} \mspace{-5mu}  & \mspace{-11mu}\mSmall{(N_k\!+\!1)}\!z_3^{{}^{N_k}} \mspace{-5mu}   & \mspace{-18mu} \cdots \mspace{-12mu}& \mspace{-11mu}\mSmall{(N_k\!+\!1)}\!z_{N_k}^{{}^{N_k}}\mspace{-6mu}  \\
     \vdots & \vdots & \vdots &        & \vdots & \vdots& \vdots& \vdots&        & \vdots\\
        \end{pmatrix}
  \begin{pmatrix}
 \mu_1 \\
 \mu_2 \\
 \mu_3 \\
 \vdots\\
 \,\,\ \mu_{{}_{N_k}}\\
 \nu_1 \\
 \nu_2 \\
 \nu_3 \\
 \vdots \\
 \,\,\ \nu_{{}_{N_k}}
\end{pmatrix}
 = 0\ .
\end{equation}
The first $2N_k$ rows of this matrix equation set can immediately be rewritten in the following block matrix form:
\begin{equation}\label{E:GPP}
\left(
\begin{array}{r|l}
 {\mathbf{G}}_{{\vphantom{[}}_{\vphantom{[}}}\mathbf{X}\ \ \ \ \ & \ \ \ \ \mathbf{N}{\mathbf{G}}_{{\vphantom{[}}}\\ \hline
 {\vphantom{[}}^{{\vphantom{[}}^{\vphantom{[}}}{\mathbf{G}}_{{\vphantom{[}}_{\vphantom{[}}}{\mathbf{X}}^{{}^{N_k+1}} & \vphantom{[}({\mathbf{N}+N_k\mathbf{I}){\mathbf{G}}_{\vphantom{[}}}{\mathbf{X}}^{{}^{N_k}}
\end{array}  
\right)\!
\left(
\begin{array}{c}
 {\mathbf{\mu}}_{ {{{\vphantom{R}}_{\vphantom{[}}}}_{\vphantom{[}} }\\ \hline
 \mathbf{\nu}_{\vphantom{R}}^{\vphantom{[}}
\end{array}
\right)\,
 = 0\, .
\end{equation}
Here $\mathbf{I}$ is the $N_k \times N_k$ identity matrix and the other matrices in (\ref{E:GPP}) have been previously defined.  (\ref{E:GPP}) is equivalent to the two coupled equations sets
\begin{align}
   \mathbf{G}\mathbf{X}\,\mu + \mathbf{N}\mathbf{G}\nu\  &=\  0\\
    \mathbf{G}\mathbf{X}{{}^{N_k+1}}\mu + (\mathbf{N} + N_k\mathbf{I})\mathbf{G}{\mathbf{X}}^{{}^{N_k}}\!\!\nu\  &=\  0
\end{align}
Here the first of these equations uniquely determines $\mathbf{\mu}$ in terms of $\mathbf{\nu}$:
\begin{equation}\label{E:muORnu}
 \mu = -{\mathbf{X}}^{-1}{\mathbf{G}}^{-1}\mathbf{N}\mathbf{G}\,\nu\ .
\end{equation}
When this is substituted into the second matrix equation and the result is rearranged slightly the following matrix equation for $\mathbf{\nu}$ results 
\begin{equation}\label{E:nuEqn}  
   [\mathbf{G}\mathbf{X}{{}^{N_k}}( N_k\mathbf{I} - {\mathbf{G}}^{-1}\mathbf{N}\mathbf{G})\, + \, \mathbf{N}\mathbf{G}{\mathbf{X}}^{{}^{N_k}}]\,\nu\  =\  0\ ,
\end{equation}
which can be multiplied from the left by ${\mathbf{X}}^{-N_k}{\mathbf{G}}^{-1}$ to give
\begin{equation}\label{E:nuEqn2}
[( N_k\mathbf{I} - {\mathbf{G}}^{-1}\mathbf{N}\mathbf{G})+ {\mathbf{X}}^{{}^{-N_k}}{\mathbf{G}}^{-1}\mathbf{N}\mathbf{G}{\mathbf{X}}^{{}^{N_k}}] \,\nu\  =\  0\,.
\end{equation}
Here (\ref{E:nuEqn}) immediately implies that if the matrix $[( N_k\mathbf{I} - {\mathbf{G}}^{-1}\mathbf{N}\mathbf{G})+ {\mathbf{X}}^{{}^{-N_k}}{\mathbf{G}}^{-1}\mathbf{N}\mathbf{G}{\mathbf{X}}^{{}^{N_k}}]$ is invertible then $\nu = 0$ must hold.  In turn, this matrix is invertible if the following matrix is invertible:
\begin{equation}\label{E:Cmatrix}
\mathbf{C} \eq [N_k\mathbf{I} - \mathbf{N} + \mathbf{G}{\mathbf{X}}^{-N_k}{\mathbf{G}}^{-1}\mathbf{N}\mathbf{G}{\mathbf{X}}^{{}^{N_k}}{\mathbf{G}}^{-1}]\,,
\end{equation}
which can be rewritten as
\begin{equation}\label{E:Cmatrix2}
\mathbf{C} \eq  N_k\mathbf{I} - \mathbf{N} + {\mathbf{U}}^{-1}\mathbf{N}{\mathbf{U}}, \ \text{where}\ \ \mathbf{U} \eq \mathbf{G}{\mathbf{X}}^{{}^{N_k}}{\mathbf{G}}^{-1}\,.
\end{equation}
Showing that $\mathbf{C}$ is invertible is not as easy at it might first appear.  Thus, for example, first observe that $\mathbf{C}$ can be rewritten as $\mathbf{C} = \mathbf{A} + \mathbf{B}$, where $\mathbf{A} \eq  N_k\mathbf{I} - \mathbf{N} $ and $\mathbf{B} \eq {\mathbf{U}}^{-1}\mathbf{N}{\mathbf{U}}$.  Obviously $\mathbf{A}$ is positive definite; furthermore, it is clear that the sum of two positive definite matrices is positive definite, so if it can be shown that $\mathbf{B}$ is positive definite, then $\mathbf{C}$ will be positive definite and thus invertible.  Next observe that it is trivial to find the eigenvectors of $\mathbf{B}$ and that the corresponding eigenvalues are given by the diagonal elements of $\mathbf{N}$.  As one can quickly convince him or herself, this, however, does not imply that $\mathbf{B}$ is positive definite, since, among other things, the eigenvectors are not orthogonal to each other (clearly $\mathbf{B}$ is not normal since $\mathbf{B}\mathbf{B}^T \neq \mathbf{B}^T\mathbf{B}$ and normality is assumed in relevant theorems of interest).  Given that the author's attempts at proving that $\mathbf{C}$ is invertible have not met with success, the issue is open.   

 Here the question naturally arises as to what the analog of the $\mathbf{C}$ matrix would have been if the set of $n$ rows starting at row $mn + 1$ had been taken instead of at row $n + 1$. In this case the system of equations that result are 
\begin{equation}\label{E:nuEqn4}
{\mathbf{C}}_m\,\nu\  =\  0\ ,   
\end{equation}
where:
\begin{equation}\label{E:Cmatrix3}
{\mathbf{C}}_m \eq  mN_k\mathbf{I} - {\mathbf{G}}^{-1}\mathbf{N}{\mathbf{G}}^{-1} + {\mathbf{X}}^{{}^{-mN_k}}{\mathbf{G}}^{-1}\mathbf{N}{\mathbf{G}}{\mathbf{X}}^{{}^{mN_k}}\,.
\end{equation}

  Here it is also natural to also raise the question about what the analog of the $\mathbf{C}$ is when logarithmic basis functions are included. 
  Thus consider the conditions that must hold if $\varphi(z) = h^{(1,\,2,\,3)}(z) = 0$, where
\begin{equation}\label{E:h123}
h^{(1,\,2,\,3)}(z) \eq \sum\limits_{k=1}^{N_k}\Blbrac\rho_k\psi_k(z) + \frac{{\mu}_k}{(z - z_k)}\ +  \frac{{\nu}_k}{(z - z_k)^{2}}\Brbrac\ .
\end{equation}
Since 
\begin{equation}
\frac{d\,\psi_k(z)}{d\,z\ \ \ } = \frac1z - \frac1{z - z_k} = -\sum\limits_{n=1}^{\infty}\frac{z_k^{n}}{z^{n+1}} = -\sum\limits_{n=2}^{\infty}\frac{z_k^{n-1}}{z^{n}}
\end{equation} 
it is obvious that the correct power series expansion for $\psi_k(z)$ is
\begin{equation}
\psi_k(z) = \sum\limits_{n=1}^{\infty}\frac1n\frac{z_k^{n}}{z^{n}}
\end{equation}
Substituting this expansion along with
\begin{subequations}\label{E:abc}
\begin{align}
\frac1{z - z_k} &= \sum\limits_{n=0}^{\infty}\frac{z_k^n}{z^{n+1}} = \sum\limits_{n=1}^{\infty}\frac{z_k^n-1}{z^{n}} \label{E:abc1}\\
\intertext{and}
\frac1{(z - z_k)^2} &= \sum\limits_{n=1}^{\infty}\frac{nz_k^{n-1}}{z^{n+1}} = \sum\limits_{n=2}^{\infty}\frac{(n - 1)z_k^{n-2}}{z^{n}}\label{E:abc2}
\end{align}
\end{subequations}
yields
\begin{equation}
h^{(1,\,2,\,3)}(z)  = \sum\limits_{k=1}^{N_k}\Blbrac \sum\limits_{n=1}^{\infty}\frac{\rho_k}n\frac{z_k^{n}}{z^{n}} + \sum\limits_{n=1}^{\infty}\frac{z_k^{n-1}}{z^{n}}{\mu_k} + 
\sum\limits_{n=2}^{\infty}\frac{(n - 1)z_k^{n-2}}{z^{n}}{\nu_k}\Brbrac 
\end{equation}
Breaking out the $n = 1$ terms separately and reindexing yields:
\begin{equation}
h^{(1,\,2,\,3)}(z)  = \frac1{z}\,\sum\limits_{k=1}^{N_k}\left( \rho_kz_k + {\mu_k} \right) + \frac1z\sum\limits_{n=1}^{\infty}\frac1{z^n}\sum\limits_{k=1}^{N_k}\Blbrac \frac{\rho_kz_k^{n+1}}{(n + 1)} + {z_k^{n}}{\mu_k} +  nz_k^{n-1}{\nu_k}\Brbrac 
\end{equation}
Introducing ${{\rho}'_k} = z_k\rho_k$ and ${\nu}'_k = {\mu_k}/z_k$ yields
\begin{equation}\label{E:rhoprime}
h^{(1,\,2,\,3)}(z)  = \frac1z\,\sum\limits_{k=1}^{N_k}\left(\rho_kz_k + {\mu_k} \right) + \frac1z\sum\limits_{n=1}^{\infty}\frac1{z^n}\sum\limits_{k=1}^{N_k}\Blbrac \frac{{\rho}'_k}{(n + 1)} + {\mu_k} +  n{{\nu}'_k}\Brbrac z_k^n\ .
\end{equation}
Ignoring the first term on the RHS of (\ref{E:rhoprime}) and setting the other powers of $1/z$ to zero yields an equation set whose first $3n$ rows can be written in block matrix form as:
\begin{equation}\label{E:Gipr}
\left(
\begin{array}{c|c|c}
 ({\mathbf{N}}+ \mathbf{I})^{-1}{\mathbf{G}}_{{\vphantom{[}}_{\vphantom{[}}}& {\mathbf{G}}_{{\vphantom{[}}_{\vphantom{[}}}\mathbf{X}\ \ \ \ \ & \ \ \ \ \mathbf{N}{\mathbf{G}}_{{\vphantom{[}}}\\ \hline
 \vphantom{[}[{\mathbf{N}+ (N_k + 1)]^{-1}\mathbf{I}){\mathbf{G}}_{\vphantom{[}}}{\mathbf{X}}^{{}^{N_k}}
& {\vphantom{[}}^{{\vphantom{[}}^{\vphantom{[}}}{\mathbf{G}}_{{\vphantom{[}}_{\vphantom{[}}}{\mathbf{X}}^{{}^{N_k}} & \vphantom{[}({\mathbf{N}+N_k\mathbf{I}){\mathbf{G}}_{\vphantom{[}}}{\mathbf{X}}^{{}^{N_k}}\\ \hline
 \vphantom{[}[{\mathbf{N}+ (2N_k + 1)]^{-1}\mathbf{I}){\mathbf{G}}_{\vphantom{[}}}{\mathbf{X}}^{{}^{2N_k}}
& {\vphantom{[}}^{{\vphantom{[}}^{\vphantom{[}}}{\mathbf{G}}_{{\vphantom{[}}_{\vphantom{[}}}{\mathbf{X}}^{{}^{2N_k}} & \vphantom{[}({\mathbf{N} + 2N_k\mathbf{I}){\mathbf{G}}_{\vphantom{[}}}{\mathbf{X}}^{{}^{2N_k}}
\end{array}  
\right)\!
\left(
\begin{array}{c}
 {\mathbf{\rho}'}_{ {{{\vphantom{R}}_{\vphantom{[}}}}_{\vphantom{[}} }\\ \hline
 {\mathbf{\mu}}_{ {{{\vphantom{R}}_{\vphantom{[}}}}_{\vphantom{[}} }\\ \hline
 {\mathbf{\nu}'}_{\vphantom{R}}^{\vphantom{[}}
\end{array}
\right)\,
 = 0\, .
\end{equation}

When there are no second order pole terms, (\ref{E:Gipr}) can be rewritten as
\begin{equation}\label{E:Gipr2}
\left(
\begin{array}{c|c}
 {\mathbf{G}}_{{\vphantom{[}}_{\vphantom{[}}}
& ({\mathbf{N}}+ \mathbf{I}){\mathbf{G}}_{{\vphantom{[}}_{\vphantom{[}}}\mathbf{X}\\ \hline
 \vphantom{[}{\mathbf{G}}_{\vphantom{[}}
{\mathbf{X}}^{{}^{N_k}}
& {\vphantom{[}}^{{\vphantom{[}}^{\vphantom{[}}}
[\mathbf{N}+ (N_k + 1)\mathbf{I}]{\mathbf{G}}_{{\vphantom{[}}_{\vphantom{[}}}
{\mathbf{X}}^{{}^{N_k}} 
\end{array}  
\right)\!
\left(
\begin{array}{c}
 {\mathbf{\rho}'}_{ {{{\vphantom{R}}_{\vphantom{[}}}}_{\vphantom{[}} }\\ \hline
   {\mathbf{\mu}}_{ {{{\vphantom{R}}_{\vphantom{[}}}}_{\vphantom{[}} }
\end{array}
\right)\,
 = 0\,,
\end{equation}
which, by following analogous steps used to obtain (\ref{E:nuEqn2}), can be rewritten as
\begin{equation}\label{E:muEqn2}
[( N_k\mathbf{I} - {\mathbf{G}}^{-1}\mathbf{N}\mathbf{G})+ {\mathbf{X}}^{{}^{-N_k}}{\mathbf{G}}^{-1}\mathbf{N}\mathbf{G}{\mathbf{X}}^{{}^{N_k}}] \,\mu\  =\  0\ .
\end{equation}
Hence combined logarithmic and simple pole basis functions are independent if the $\mathbf{C}$ matrix introduced earlier is nonsingular.

\newpage

\appendix
\begin{center}
 \begin{Large}{\textbf{Appendix A}}\end{Large}
\end{center}

\renewcommand{\theequation}{A-{\arabic{equation}}}
\setcounter{equation}{0}

\section*{\hfil Branch Cut and Analytic Continuation Side Issues in $\mathbb{C}$\hfil}\label{S:A}
\ \hfill  \\
\vskip -.2in 

\noindent
    This appendix addresses two distinct questions (or misperceptions) that some readers may have: (1)  Since the logarithmic point source basis functions $\psi_k$ [as given by (\ref{E:psi})] are used, and logarithmic functions generally have brach cuts in \C, does $\psi_k$ have problematic branch-cuts and, if not, why not? (2)  When only simple poles may be present in the interior region, since it seems natural to assume that the analytic continuation of a zero function is always a zero function, can one extend analytic continuation into the interior of the unit disk along various paths while assuming that $f(z) = 0$ still holds until there are areas of overlap, so that one can simply integrate around each pole separately and then directly prove the desired complex plane uniqueness results for simple poles from a direct application of the residue theorem?   This analytic continuation procedure may be tempting, because for a finite collection of potential simple poles in the interior, Theorem~2 guarantees that if $f(z) = 0$ holds in the exterior region, then $\mu_k = 0$, so this analytic continuation procedure seems to work for this case.  As far as counter examples go, one need only consider a simple circle with a uniform density constant simple pole strength and a compensating interior pole, but what about a denumerable set of separated simple poles, which is the case of prime interest here?  The reader who is not bothered by this last sort of question may simply skip the second part of this appendix that deals with this issue. 

\begin{center} 
 \hfil\\
\textbf{\underline{Branch-cut Issues}}
 \hfil\\
\end{center}

  First consider branch-cut related issues in the exterior of a unit disk, when all the source points reside inside the unit disk.  Although simple and higher order poles obviously do not have such branch cuts, at first glance it might appear that ${\psi}_k(z)$ given by (\ref{E:psi}) does have branch cuts:
\begin{equation}\label{E:psi2}
 {\psi}_k(z) \, \eq  \,\ln\,\frac1{z - z_k} - \,\ln\,\frac1{z}
\end{equation}
since, for example,
\begin{equation}\label{E:log}
\ln\, z = \ln r + i\theta\ .
\end{equation}
The truth of the matter, however, is not so straightforward and it turns out that there are no branch cuts in the exterior of the unit complex disk for ${\psi}_k(z)$.  To see this geometrically, first let $\ell_k \eq |z - z_k|$ and let $\phi_k$ equal to the angle between the positive $x$-axis direction and the vector parallel to the line segment connecting $z_k$ to $z$ [which is to say, the angle between the \R{2}\ line segment connecting the points $(x_k,\,y_k)^T$ and $(x,\,y_k)^T$ and the line segment connecting $(x_k,\,y_k)^T$ and $(x,\,y)^T$]. Then
\begin{equation}\label{E:psi3}
 {\psi}_k(z) \, \eq  \,\ln\,\frac{r}{\ell_k} + i(\theta - \phi_k)
\end{equation}
and when one draws a plane figure displaying the various relevant lines and angles, it is perfectly obvious that, for any choice of $z_k$, $\theta$ and $\phi_k$ are equal in value at two places as $z$ traces out a closed loop around the unit disk (keeping in mind that $|z| \geq 1$).
In fact, from the geometry of this figure it is clean that $\pi/2 \geq |\theta - \phi_k|$ must always hold and, hence ${\psi}_k(z)$, for each $k$. is uniquely defined for $z| \geq 1$.

  This branch-cut issue can also be settled using analysis by considering the form
\begin{equation}\label{E:psi5}
 {\psi}_k(z) \, = \,\ln\,\frac z{(z - z_k)}\, \eq  \,\ln\,\frac 1{(1 - z_k/z)}
\end{equation}
and noting that since ${\text{Re}}\,\{1 - z_k/z\} > 0$ (which follows immediately from $1 > |z_k/z| \geq |{\text{Re}}\,\{ z_k/z\}|\,\,\,$),  the absolute value of the argument of $1/(1 - z_k/z)$ is less than $\pi/2$ from elementary properties of the $\arctan$ function.

  Notice that while there are no branch-cuts for $\psi_k$ for $|z| \geq 1$, for $|z| < 1$ it is clear that there is a branch cut connecting $z_k$ and the origin.  An expansion of the form (\ref{E:complexLogs}) thus has a collection of branch cuts that form a star-like pattern in the interior of the unit disk.

  Finally, since a linear superposition of analytic function is analytic the use of $\{\psi_k\}_{k=1}^{N_k}$ as a set of basis functions, as in (\ref{E:complexLogs}) entails no branch-cut interpretational issues at all.  It is clear, however, 
that one cannot decide to treat the RHS of a $\psi_k$ fit, or expansion such as (\ref{E:complexLogs}), as a single logarithmic function using the standard properties for combining logarithms--when a linear combination of $\psi_k$'s are combined into a single composite logarithmic function brach-cut issues are arbitrarily introduced. (Since the argument of the log of some term is assumed to be between $0$ and $2\pi$, even absorbing the constant $\mu_k$ into $\psi_k$ by itself causes problems and introduces unwanted restrictions, because the complex part of $\mu_k\psi_k$ is not similarly restricted.)

\begin{center} 
 \hfil\\
\textbf{\underline{Analytic Continuation Issues}}
 \hfil\\
\end{center}

  Next, consider the analytic continuation issues, which are perhaps best addressed by consideration of counter-examples.  It may seem reasonable to argue that continuation of the zero function is a special case, especially when it done on either side of a neighborhood where it is known that, at most, one simple pole resides and it would seem that what happens outside the greater region under consideration does not matter.  Thus, suppose that an expansion of the form (\ref{E:SimplePoles}) is being considered, and that the poles are ordered such that $z_1$ is the location of the simple pole that is closest to the analytic region $(|z| > 1)$.  Consider a ``fit'' to the function $f(z) = 0$.  Then, if one analytically continues the zero function into the unit disk close to $z_1$ along  a path to one side of this pole, it would seem obvious that $f(z) = 0$ over this entire region.  On the other hand, if one starts from the same region and does the same thing on the other side of the pole then in the region of overlap between these two analytic continuations, it is obvious that $f(z) = 0$.  From this, one can conclude that the simple pole necessarily has a weight of zero: $\mu_k = 0$.  Proceeding in a like fashion to each succeeding pole, one could thus argue that $f(z) = 0$ for $|z| > 1$ implies that $\mu_k = 0$ for all $k$, which is the desired result.  Although, at each step, since there are two analytic regions of continuation that are not simply connected, and it is known that analytic continuation into regions that are not simply connected are problematic; the simple pole in question produces no branch cuts in either of these two regions and, furthermore, the $f(z)$ always agrees in the region of overlap.  No function is simpler than the zero function and one might argue that in this special case of analytic continuation there is no problem, at least in this particular instance, and, when used in this way, it leads to the right result.

To see what is wrong with this argument, consider the following.  It will only be necessary to show that one counter example exists (here some liberty will be taken and it will merely be shown that some counter-example is very likely to exist, without explicitly constructing the explicit counter-example itself).  Let $\{s_j\}_{j=1}^{\infty}$ be a sequence of points in the interior of the unit circle (excluding the origin).  Consider the set of corresponding points in the exterior of the unit circle given by $t_k = 1/s^*_j$.  Then it is well known that the sequence of values $\{f(t_j)\}_{j=1}^{\infty}$ completely characterizes the analytic function $f(z)$.  Consider a DIDACKS fit of the form
\begin{equation}\label{E:Fseq}
\varphi(z) = \sum\limits_{k=1}^N \frac{a_k}{z - s_k}\ ,
\end{equation} 
which always exist for $N < \infty$ since the associated linear system was always shown to be solvable in \cite{DIDACKSI}.
It does not seem unreasonable to assume that for some sequence of points $\{t_k\}$ and some choice of analytic function $f(z)$ that this system remains solvable as $N \rightarrow \infty$.  [For example, it is clearly perfectly acceptable to choose an $f(z)$ that has the form (\ref{E:Fseq}) itself, with some appropriate choice of $\{a_j\}_{j=1}^{\infty}$ and $\{s_j\}_{j=1}^{\infty}$, so long as $|a_j| \neq 0$ (in the argument that follows, it is important that the sum in (\ref{E:Fseq}) contains an infinite number of simple poles).]

  Now, $f(z)$ for $|z| > 1$ could have been represented by its values at some other infinite sequence of points, say $\{f(p_k)\}_{k=1}^{\infty}$, where $p_k \neq t_j$ for all $j$ and $k$.
Here let $z_k = 1/p^{*}_k$ and assume that a fit of the form (\ref{E:SimplePoles}) with $n = \infty$ also exists.
Then, for $|z| \geq 1$:
\begin{equation}\label{E:fzero}
f(z) = 0 = \sum\limits_{k=1}^{\infty} \frac{a_k}{z - s_k}\ - \ \sum\limits_{k=1}^{\infty} \frac{\mu_k}{z - z_k} \ .
\end{equation}
Since by construction, $a_j \neq 0$ for all $j$ and $\mu_k \neq 0$ for all $k$, it is clear that the strategy of analytically continuing the zero function so as to encompass an isolated simple pole must fail if \underline{any} representation of $f(z) = 0$ exists that has the form (\ref{E:fzero}); moreover, for this to occur it is only necessary that for some $a_j \neq 0$ for all $j$ and some $\mu_k \neq 0$ for all $k$, that $s_j$ and $p_k$ exist such that 
 \begin{equation}\label{E:SUMzero}
 \sum\limits_{k=1}^{\infty} \frac{a_k}{z - s_k}\ =\ \sum\limits_{k=1}^{\infty} \frac{\mu_k}{z - z_k}\ , 
\end{equation}
where (for all $j$ and $k$) $0< |s_k| < 1$, $0< |p_k| < 1$ and  $s_j \neq p_k$. 
It seems most probable that two such bounded sequences of simple poles exist.

\newpage

\end{document}